\newtheorem{theorem}{Theorem}
\newtheorem{definition}{Definition}
\newtheorem{lemma}{Lemma}
\newtheorem{proposition}{Proposition}
\newtheorem{example}{Example}
\def\@IEEEsectpunct{.\ \,}
\def\paragraph{\@startsection{paragraph}{4}{\z@}{1.5ex plus 1.5ex minus 0.5ex}%
{0ex}{\normalfont\normalsize\sffamily\bfseries}}
\newcommand {\scrA} {\ensuremath{\mathscr{A}}}
\newcommand {\scrB} {\ensuremath{\mathscr{B}}}
\newcommand {\scrS} {\ensuremath{\mathscr{S}}}
\newcommand {\scrT} {\ensuremath{\mathscr{T}}}
\newcommand {\calAcom}{\mathcal{A}\mathit{com}} 
\newcommand {\calLI}{\mathcal{L}\mathbf{1}}
\newcommand {\minc}{\mathit{min}}
\newcommand {\maxc}{\mathit{max}}
\newcommand \mso {\ensuremath{\mathrm{MSO}}\xspace}
\newcommand \fo {\ensuremath{\mathrm{FO}}\xspace}
\newcommand \bet {\ensuremath \mathit{bet}}
\newcommand {\N}{\mathit{N}}
\newcommand \bfS {\ensuremath{\mathbf{S}}}
\newcommand \bfV {\ensuremath{\mathbf{V}}}
\newcommand \calV {\ensuremath{\mathcal{V}}}
\newcommand \calC {\ensuremath{\mathcal{C}}}
\newcommand \bfD {\ensuremath{\mathbf{D}}}
\newcommand \Reg {\mathit{Reg}}
\newcommand \bfW {\ensuremath{\mathbf{W}}}
\newcommand \calW {\ensuremath{\mathcal{W}}}
\newcommand \bbN {\mathbb{N}}
\newcommand \supp {\mathit{Supp}}
\newcommand {\ssim} {\approx}
\renewcommand {\S} {\mathscr{S}}
\newcommand {\da} {\dagger}
\newcommand \Acom {\mathbf{Acom}}
\newcommand \LI {\mathbb{L}\mathbf{1}}
\newcommand \sdp {\mathbin{**}}
\newcommand {\pref} {\mathit{pref}}
\newcommand {\suff} {\mathit{suff}}
\newcommand {\id} {\mathit{id}}
\newcommand {\flip} {\mathit{flip}}
\newcommand*{\@old@slash}{}\let\@old@slash\slash
\def\slash{\relax\ifmmode\delimiter"502F30E\mathopen{}\else\@old@slash\fi}
\renewcommand {\(}{\left(}
\renewcommand {\)}{\right)}
\newcommand \ltt {\ensuremath{\mathrm{LTT}}\xspace}
\newcommand \lrtt {\ensuremath{\mathrm{wLRTT}}\xspace}
\newcommand \callrtt {\ensuremath{\mathit{w}\mathcal{LRTT}}\xspace}
\newcommand \op {\ensuremath{\mathit{op}}}
\newcommand{\approxr}[2]{\stackrel{r}{\approx}\mathrel{{}^{#1}_{#2}}}
\theoremstyle{plain}
\newtheorem{thm}{\protect\theoremname}
\theoremstyle{definition}
\theoremstyle{plain}
\theoremstyle{plain}
\theoremstyle{definition}
\newtheorem{example}[thm]{\protect\examplename}
\providecommand{\corollaryname}{Corollary}
\providecommand{\definitionname}{Definition}
\providecommand{\examplename}{Example}
\providecommand{\lemmaname}{Lemma}
\providecommand{\theoremname}{Theorem}
\begin{document}

\title{An Algebraic Characterisation of First-Order Logic with Neighbour}

\author{\IEEEauthorblockN{Amaldev Manuel}
\IEEEauthorblockA{Indian Institute of Technology Goa\\
amal@iitgoa.ac.in}
\and
\IEEEauthorblockN{Dhruv Nevatia}
\IEEEauthorblockA{Chennai Mathematical Institute\\
dhruv@cmi.ac.in}
}

\IEEEoverridecommandlockouts
\IEEEpubid{\makebox[\columnwidth]{978-1-6654-4895-6/21/\$31.00~
		\copyright2021 IEEE \hfill} \hspace{\columnsep}\makebox[\columnwidth]{ }}
	
\maketitle
\begin{abstract}
We give an algebraic characterisation of first-order logic with the neighbour relation, on finite words. For this, we consider languages of finite words over alphabets with an involution on them. The natural algebras for such languages are involution semigroups. To characterise the logic, we  define a special kind of semidirect product of involution semigroups, called the locally hermitian product. The characterisation theorem for FO with neighbour states that a language is definable in the logic if and only if it is recognised by a locally hermitian product of an aperiodic commutative involution semigroup, and a locally trivial involution semigroup.
We then define the notion of involution varieties of languages, namely classes of languages closed under Boolean operations, quotients, involution, and inverse images of involutory morphisms. An Eilenberg-type correspondence is established between involution varieties of languages and pseudovarieties of involution semigroups.   
\end{abstract}. 

%%%%%%%%%%%%%%%%%%%%%%%%%%%%%%%%%%%%%%%%%%%%%%%%%%%%%%%%%%%%%%%%%%%%%%%%%%%%%%%
%. INTRODUCTION
%%%%%%%%%%%%%%%%%%%%%%%%%%%%%%%%%%%%%%%%%%%%%%%%%%%%%%%%%%%%%%%%%%%%%%%%%%%%%%%

\section{Introduction}
We give an algebraic characterisation of a logic over finite words, namely \emph{the first-order logic with the neighbour relation}. Let $A$ be a finite alphabet. Formulas of the logic FO with neighbour, $\fo(A, \minc, \maxc, \N)$, are interpreted over finite words over the alphabet $A$. The constants $\minc$ and $\maxc$ denote the first and last positions of a given word respectively. The atomic formulas of the logic are the following: The predicate $P_{a}(x)$, for $a \in A$, denotes that the position $x$ is labelled by the letter $a$. The binary predicate $\N(x,y)$ denotes that $x$ and $y$ are neighbours, i.e., either $x+1=y$ or $y+1=x$. Finally, we have the equality predicate $x=y$.  The  set of formulas of the logic are closed under Boolean operations and first order quantifications, i.e., $\varphi\vee \psi,\varphi\wedge\psi, \neg \varphi,  \forall x\, \varphi$, and $\exists x\, \varphi$ are also formulas of the logic, if $\varphi$ and $\psi$ are formulas of the logic. For example, the formula 
\[
P_{a}(\minc) \wedge P_{b}(\maxc) \wedge \forall x \forall y\, \(\N(x,y) \rightarrow \(P_{a}(x) \leftrightarrow  P_{b}(y)\)\)
\] 
defines the language $a(ba)^{\star}b$ over the alphabet $\{a,b\}$. The language \emph{defined} by a formula $\varphi$, denoted as $L(\varphi)$, is the set of all words satisfying the formula $\varphi$.

Before we go into the characterisation problem of $\fo(A, \minc, \maxc, \N)$, it is interesting to note the expressibility of related logics. The neighbour and {\em between} predicates --- 
the ternary predicate $\bet(x,y,z)$ is true if position $y$ is strictly between positions $x$ and $z$ --- were first studied in \cite{PAG,PAG-Arxiv}. Although
one could express that a position is one of the endpoints using $\bet$ as well as $\N$, the constants $\min$ and $\max$ themselves are not expressible using $\bet$ and $\N$.
Since both $\bet$ and $\N$ are their own left-to-right dual, if the constants $\min$ and $\max$ are not used then languages definable using these predicates are closed under the reverse operation.  For the sake of good algebraic properties, we include the constants $\min$ and $\max$ in our vocabulary. Next we state the results of \cite{PAG,PAG-Arxiv} in this setting.

The  monadic second-order logic $\mso(A, \minc, \maxc, \N)$, is the extension that also allows monadic second-order quantification --- $\exists X \varphi$, $\forall X \varphi$ are also formulas whenever $\varphi$ is a formula of the logic; The formula $\exists X \varphi$ is true if there is a set of positions $X$ that satisfies the formula $\varphi$. For example, the formula
\[
\exists X \forall x \forall y \(\N(x,y) \rightarrow \(X(x) \leftrightarrow  \neg X(y)\)\) \wedge X(\minc) \wedge \neg X(\maxc)  
\] 
defines all the words of even length. It turns out that $\mso(A, \minc, \maxc, \N)$ defines precisely all regular languages, i.e., it has the same expressive power as $\mso$ with the \emph{successor} relation, $\mso(A, +1)$, by B\"uchi-Elgot-Trakhtenbrot's theorem. The relationship between the predicates $\N$ and $\bet$ are analogous to their oriented
counter parts, namely, the the successor relation ($x+1=y$) and the order relation ($x<y$). Both $\N$ and $\bet$ are definable in terms of the other by monadic second-order formulas; thus $\mso(A, \minc, \maxc, \bet)$ also defines all regular languages. Likewise, $\fo(A, \minc, \maxc, \bet)$ defines precisely all aperiodic regular languages --- languages that are definable in the logic $\fo(A, <)$.
In fact, this parallelism between $\fo(A, \minc, \maxc, \bet)$ and $\fo(A, <)$ extends to their quantifier alternation hierarchies.

However, the parallel described  so far breaks down in the case of $\fo(A, \minc, \maxc, \N)$. There are languages expressible in $\fo(A, +1)$ that are not expressible in the former logic. For instance, the language 
\[
L = c^* abc^*
\] 
over the alphabet $\{a,b,c\}$ is expressible in the logic $\fo(A, +1)$. But, using an Ehrenfeucht-Fra\"iss\'e argument \cite{ebbinghaus2005finite, PAG}, it can be shown that $L$ is \emph{not} definable in $\fo(A, \minc, \maxc, \N)$. Thus, we have the question of characterising the languages definable in this logic.

Using Hanf's theorem \cite{ebbinghaus2005finite} from finite model theory, it is possible to give language-theoretic characterisations of both $\fo(A, +1)$ and $\fo(A, \minc, \maxc, \N)$.
For $t>0$, we define the \emph{equality with threshold $t$} on the set $\mathbb{N}$ of 
natural numbers by 
\begin{equation}
i=^{t}j ~:=~ \begin{cases} i=j & \text{if $i<t$},\\
                          i\geq t \text{ and } j\geq t & \text{otherwise}.
                          \end{cases}
                          \end{equation}
The word $y \in A^{+}$ is a \emph{factor} of the word $u \in A^{+}$ if $u = x y z$ for some $x,z$ in $A^*$.  We use $\sharp(u, y)$ to denote the number of times the factor $y$
appears in $u$, i.e.\ the number of pairs 
$(x,z)$, where $x,z \in A^{*}$, such that $u=xyz$.

\begin{definition}
\label{defn:approx}
Let $\approx_k^t$, for $k, t>0$, be the equivalence on $A^*$, whereby two words
$u$ and $v$ are equivalent if either they both have length at most $k-1$ and
$u=v$, or otherwise they have
\begin{enumerate}
\item the same prefix of length $k-1$,
\item the same suffix of length $k-1$,
\item and the same number of occurrences, up to threshold $t$, for all factors of length $ \leq k$, i.e.\ $\sharp(u, y)  =^{t} \sharp(v, y)$
for each word $y \in A^+$ of length at most  $k$.
\end{enumerate}
\end{definition}

A language is \emph{locally threshold testable} (or \ltt for short) if it is a
union of $\approx_{k}^{t}$ classes, for some $k,t > 0$.
Locally threshold testable languages are precisely the class of languages
definable in $\fo(A, +1)$ \cite{BeauquierPin,WolfgangThomas}.

 Since the neighbour predicate $\N$ is definable using the successor relation in first-order logic, $\fo(A,\minc, \maxc, \N)$ definable languages are a subset of \ltt.  But this inclusion is strict, as we have seen.

 We define a coarser equivalence $\approxr{t}{k}$ by the counting factors only up to reverse.
Let $\sharp^{r}(w,v)$ denote the 
number of occurrences of $v$ or $v^{r}$ in $w$, i.e.\ the number of pairs 
$(x,y)$, where $x,y \in A^{*}$, such that $w=xvy$ or $w=xv^{r}y$. 
%Notice that $\sharp^{r}(w,v)=\sharp^{r}(w,v^{r})=\sharp^{r}(w^{r},v)=\sharp^{r}(w^{r},v^{r})$.

\begin{definition}
\label{defn:approxr}
Let $k,t>0$.  Two words $w,w'\in A^{*}$ are $\approxr{t}{k}$-equivalent if $|w|<k$ and $w = w'$, or $w,w'$ both are of length at least $k$, and they have 
\begin{enumerate}%[nosep]
  \item the same prefix of length $k-1$,
  \item the same suffix of length $k-1$, and
  \item $\sharp^{r}(w,v) =^{t} \sharp^{r}(w',v)$ for each word $v \in A^{+}$ of length at most $k$.
\end{enumerate}
\end{definition}

%Notice that $w\approxr{t}{k}w^{r}$ for all $w\in A^{*}$ and $w\approx^{t}_k w'$
%implies $w\approxr t k w'$ for all $w,w'\in A^{*}$.  Notice also that $\approxr
%t k$ is not a congruence.  Indeed, we have $ab \approxr t k ba$ but $aba
%\not\approxr t k baa$.  On the other hand, if $v\approxr t k w$ then for all
%$u\in A^{*}$ we have $uv \approxr t k uw$ or $uv \approxr t k uw^{r}$, and
%similarly $vu \approxr t k wu$ or $vu \approxr t k w^{r}u$.

It is shown in \cite{PAG} that a language is definable in $\fo(A, \N)$ if it is closed under the reverse operation and is a union of equivalence classes of  $\approxr{t}{k}$ for some
$k,t>0$. Such languages are called {\em locally-reversible threshold testable} languages. 
The class of languages accepted by $\fo(A, \minc, \maxc, \N)$ are characterised in a similar manner. A language $L$ is {\em weakly locally-reversible threshold testable}, \lrtt for
	short, if it is a union of equivalence classes of $\approxr{t}{k}$ for some
	$k,t>0$. By a straight-forward adaptation of the proofs in \cite{PAG,PAG-Arxiv} we can show that

\begin{theorem}
  A language is definable in the logic $\fo(A, \min, \max, \N)$ if and only if it is weakly locally-reversible
  threshold testable.
\label{Thm:fon}
\end{theorem}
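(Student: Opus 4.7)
The plan is to follow the classical Hanf-locality template used by Beauquier--Pin to characterise $\fo(A, +1)$ as \ltt, adapting it to the unoriented setting. There are two directions. The ``only if'' direction (FO-definable implies \lrtt) is obtained by invoking Hanf's locality theorem, since the structure $(w, \minc, \maxc, \N, (P_a)_{a \in A})$ has degree at most two. The ``if'' direction is done by constructing, for fixed $k, t$, an explicit $\fo(A, \minc, \maxc, \N)$ formula for each $\approxr{t}{k}$-class.

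For the FO-to-\lrtt direction, fix $\varphi$ of quantifier rank $q$. Hanf's theorem gives $r, s>0$ such that if $u$ and $v$ have the same multiset of $r$-neighbourhood isomorphism types, counted up to threshold $s$, then they agree on all FO sentences of quantifier rank $q$. The crucial observation is that because $\N$ is symmetric, the $r$-neighbourhood of an interior position $x$ (at $\N$-distance more than $r$ from both $\minc$ and $\maxc$) is the substructure induced by the $2r+1$ positions within distance $r$; as an unoriented structure, it is determined by the factor of $w$ of length $2r+1$ centred at $x$, \emph{only up to reverse}. Positions within distance $r$ of $\minc$ (resp.\ $\maxc$) have a distinguished neighbourhood type whose isomorphism class is rigidly determined by the prefix (resp.\ suffix) of $w$ up to length roughly $2r$. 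Consequently, the hypothesis of Hanf's theorem is equivalent to: the same prefix of length $2r$, the same suffix of length $2r$, and $\sharp^{r}(u, y) =^{s} \sharp^{r}(v, y)$ for each factor $y$ of length at most $2r+1$. Setting $k = 2r+1$ and $t = s$, this is exactly $u \approxr{t}{k} v$, so $L(\varphi)$ is a union of $\approxr{t}{k}$-classes.

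For the \lrtt-to-FO direction, handle the finitely many short words ($|w| < k$) separately using obvious formulas. For a long $\approxr{t}{k}$-class, build a characteristic FO formula by conjoining three kinds of subformulas. First, define inductively $\pi_i(x)$, ``$x$ is the $i$-th position from $\minc$'', by $\pi_1(x) := x = \minc$, $\pi_2(x) := \N(x,\minc) \wedge x \neq \minc$, and $\pi_i(x) := \exists y\, (\pi_{i-1}(y) \wedge \N(x,y) \wedge \neg \pi_{i-2}(x))$ for $i \geq 3$; analogous formulas $\sigma_i$ from $\maxc$ pin down the suffix positions. These let us fix the prefix and suffix of length $k-1$. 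Second, for each $v \in A^+$ with $|v| \leq k$, write $\mathrm{Chain}_v(x_1, \ldots, x_{|v|}) := \bigwedge_i P_{v_i}(x_i) \wedge \bigwedge_i \N(x_i, x_{i+1}) \wedge \bigwedge_{i \neq j} x_i \neq x_j$. By symmetry of $\N$, any chain of distinct $\N$-neighbours is either a forward or a backward factor, so the number of tuples satisfying $\mathrm{Chain}_v$ equals $\sharp^{r}(w,v)$ if $v$ is not a palindrome and $2\sharp^{r}(w,v)$ if it is (with $|v| \geq 2$). The condition ``at least $n$ such tuples exist'' is expressible in FO, so we can state $\sharp^{r}(w, v) =^{t} c$ for any fixed target $c \leq t$. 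Conjoining these finitely many constraints yields the desired formula.

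The main subtlety lies in the \lrtt-to-FO direction and is essentially bookkeeping: handling palindromic factors (whose occurrences contribute two chain tuples each), ensuring that the prefix/suffix specifications are compatible with the boundary-containing occurrences counted by $\mathrm{Chain}_v$ (one should strictly speaking count only occurrences ``away from the endpoints'' and absorb the boundary ones into the prefix/suffix specification), and verifying that the resulting formula exactly captures the $\approxr{t}{k}$-class. On the FO-to-\lrtt side, the only point requiring care is matching Hanf's neighbourhood condition to the $\sharp^r$ counting; once one notes that the $r$-ball around an interior position in $(\N)$ is literally an unoriented path graph, the correspondence is immediate.
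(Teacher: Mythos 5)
The paper does not actually prove Theorem~\ref{Thm:fon}; it states it as a ``straight-forward adaptation'' of the arguments in \cite{PAG,PAG-Arxiv}, which are exactly the Hanf-locality/Ehrenfeucht--Fra\"iss\'e arguments you reproduce. Your proposal is correct and follows essentially that same route (Hanf's theorem with the observation that $r$-balls of interior positions are unoriented labelled paths, hence determine factors only up to reverse, for one direction; explicit $\pi_i$/$\sigma_i$ and $\mathrm{Chain}_v$ formulas for the other), with only the routine threshold and off-by-one bookkeeping---which you correctly flag, and which is absorbed by the freedom to choose $k,t$ large---left implicit.
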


Obtaining an algebraic characterisation for the class $\lrtt$ is an interesting problem.  The analogous characterisation of $\fo(A, +1)$  is a deep result \cite{BeauquierPin, Zalcstein1972, BrozozoskiSimon, McNaughton} in the theory of finite semigroups. The first observation is that $\ltt$ is a \emph{variety of languages} --- 
a class of languages that is closed under Boolean operations, quotients with respect to words, and inverse images under homomorphisms. Eilenberg's variety theorem states that varieties of languages correspond to pseudovarieties of semigroups ---  a pseudovariety of finite semigroups is a set of finite semigroups that is closed under finite direct products, subsemigroups and quotients. The characterisation of $\ltt$ proceeds by first writing the class as a semidirect product of pseudovarieties of aperiodic commutative monoids ($\Acom$) and righty trivial semigroups ($\bfD$ is the pseudovariety of semigroups that satisfy the equation $se=e$ for all elements $s$ and idempotents $e$). This step is the algebraic analogue of synthesising an automaton for  an $\ltt$ language as a cascade of a scanner and an acceptor \cite{BeauquierPin}.
The next step in the proof uses the framework of categories for obtaining the identities  capturing the semidirect product \cite{StraubingVD, Tilson, StraubingBook}.

In the case of $\lrtt$, this approach does not work; because \lrtt does not form a {variety of languages} ---  for instance, the language $(xy)^{*}ab(xy)^{*}$ over the alphabet $\{a,b,x,y\}$ is in \lrtt. Its inverse image under the morphism $a\mapsto a, b \mapsto b, c \mapsto xy$
is the language $c^{*}abc^{*}$, over the alphabet $\{a,b,c\}$. As we have seen, this language is not in $\lrtt$. Therefore, $\lrtt$ is not closed under inverse image of morphisms, and hence not a variety of languages. This necessitates a rubric, of the like of varieties and operations on varieties, to study the class $\lrtt$.

\paragraph*{Our results}
It was already observed in \cite{PAG} that one needs to extend semigroups with an involution (also called $\star$-semigroups) --- an involution on a semigroup $S$ is an operation $\star$ such that $(a^{\star})^{\star}=a$ and $(ab)^{\star} = b^{\star}a^{\star}$, for each $a,b \in S$ --- to characterise \lrtt. The involution operation is a generalisation of the reversal operation on words. An involutory alphabet $\scrA$ is a finite alphabet $A$ with a bijection $\da$ on it. The map $\da$ extends to words over $A$ as $(a_{1}\ldots a_{n})^{\da} = a_{n}^{\da}\ldots a_{1}^{\da}$, where each $a_{i} \in A$. This map is an involution on the semigroup $A^{+}$. We lift the notions of recognisability and syntactic semigroups to the case of languages over involutory alphabets. Syntactic algebras for such languages are semigroups with an involution. A finite alphabet $A$ can be seen as an involutory alphabet with the identity function $\id$ as $\da$; such alphabets are called \emph{hermitian}. This just means that $A^{+}$ is equipped with the reverse operation as involution. Syntactic algebras for languages over hermitian alphabets are semigroups with involutions that are generated by their hermitian elements ($a$ is hermitian if  $a^\star = a$).

Assume $(A, \da)$ and $(B, \star)$ are involutory alphabets and $h\colon A^{+} \rightarrow B^{+}$ is a morphism. The morphism $h$ is involutory if $h(w^{\da})=h(w)^{\star}$.
An \emph{involution variety of languages} maps each involutory alphabet $\scrA = (A, \da)$ to a class of languages over $A$ that is closed under involutions, Boolean operations, quotients with respect to words and inverse images under involutory morphisms. We establish an Eilenberg-type correspondence between involution varieties of languages and pseudovarieties of finite involution semigroups --- classes of finite involution semigroups that are closed under finite direct products, sub-$\star$-semigroups and quotients. 

If $\scrS=(S, \star)$ and $\scrT=(T, \da)$ are involution semigroups, a two-sided action of $T$ on $S$ is \emph{compatible} with the involution if $(tst')^\star = t'^\da s^\star t^\da$, where $tst'$ denotes the left action by $t\in T$ and right action by $t' \in T$ on the element $s\in S$. A compatible two-sided action of $T$ on $S$ is {\em locally hermitian} if for all idempotents $e\in T$ and elements $s\in S$ it holds that $ese^\da = es^\star e^\da$. Intuitively this means that within an
idempotent context that looks the same on the outside from either direction, one can reverse factors. Bilateral semidirect products with locally hermitian actions are called locally hermitian semidirect products. If $\bfV$ and $\bfW$ are pseudovarieties of involution semigroups, the locally hermitian  product of $\bfV$ and $\bfW$ is the pseudovariety generated by all locally hermitian semidirect products of involution semigroups from $\bfV$ and $\bfW$.
It is shown that a language is \lrtt if and only if it is recognised by a locally hermitian semidirect product of an aperiodic commutative semigroup and a locally trivial semigroup. This result can also be stated in terms of the corresponding pseudovarieties of involution semigroups $\Acom^\star$ and $\LI^\star$. Therefore, we have an algebraic characterisation of the class $\fo(A, \minc, \maxc, \N)$.

\paragraph*{Related work} The $\fo$ and $\mso$ logics using the predicates 
$\N$ and $\bet$ were introduced in \cite{PAG,PAG-Arxiv}. Formulas of these logics that do not use the constants $\minc$ and $\maxc$, are self-dual, i.e., replacing each predicate by its dual, for instance $N(x,y)$ by $N(y,x)$, results in an equivalent formula. 
Therefore, $\mso(A, \N)$ and $\mso(A, \bet)$  recognise precisely the class of \emph{reversible} regular languages --- a language is reversible if it is invariant under taking the reverse of words in it. Similarly, $\fo(A, \bet)$ recognises all aperiodic reversible languages. In \cite{PAG}, the question of deciding membership in these classes was studied. For the logics $\mso(A, \bet),\fo(A, \bet)$, the existing decidable characterisations of $\mso(A, <)$ and $\fo(A, <)$ immediately yield an answer. For the case of $\fo(A, \N)$, a partial answer was given; it was shown that if a language $L$ is definable in $\fo(A, \N)$, then its syntactic semigroup  satisfies the identity $ex^\star e^\star = ex e^\star$, in addition to those defining the class $\ltt$.

A different but related {\em between} predicate
(namely $a(x,y)$, for $a \in A$, is true if there is an $a$-labelled position
between positions $x$ and $y$) was introduced in
\cite{Straubing1,Straubing2,Straubing3}. They 
characterise the expressive power of two-variable first-order logic with the order relation  ($\fo^{2}(<)$)
enriched with the between predicates $a(x,y)$ for $a\in A$, and show an
algebraic characterisation of the resulting family of languages.

Since the class of involution semigroups contains many important classes of semigroups such as inverse semigroups, there is substantial literature on varieties of involution semigroups from a semigroup theoretic perspective (see \cite{Dolinka} for a survey). However we are not aware of a language theoretic study of pseudovarieties of finite involution semigroups or results similar to ours.

\paragraph*{Orgranisation of the paper} In Section \ref{Sec:2}, we describe involution semigroups and extend the notions of recognisability by semigroups to the case of languages with an involution. In  Section \ref{Sec:4}, involutory semidirect products are defined, and we introduce the notion of locally hermitian semidirect product. Then, it shown that a language is $\lrtt$ if and only if it is recognised by the locally hermitian semidirect product of an aperiodic commutative semigroup and a locally trivial semigroup. We introduce the notion of involution varieties of languages and pseudovarieties of involution semigroups, and prove an Eilenberg correspondence between them in Section \ref{Sec:3}.  In Section \ref{Sec:5}, we conclude with some avenues for further inquiry.

%%%%%%%%%%%%%%%%%%%%%%%%%%%%%%%%%%%%%%%%%%%%%%%%%%%%%%%%%%%%%%%%%%%%%%%%%%%%%%%
%. SECTION 2
%%%%%%%%%%%%%%%%%%%%%%%%%%%%%%%%%%%%%%%%%%%%%%%%%%%%%%%%%%%%%%%%%%%%%%%%%%%%%%%

\section{Languages over Involutory Alphabets}
\label{Sec:2}

\subsection{Recognisable languages}

Let $A$ be a finite alphabet. Then, $A^+$ denotes the set of  nonempty finite words over $A$, and $A^*$ denotes the set of all finite words over $A$ including the empty word $\epsilon$. Extending this notation, $A^{\leq k}$ denotes the subset of $A^{*}$ consisting of words of length at most $k$, including $\epsilon$.

A semigroup $S$ is a set together with an associative binary operation. Unless specified, the semigroup operation is written as $x \cdot y$, for semigroup elements $x,y$; but we omit notating it whenever possible. If the operation has an identity, then $S$ is called a monoid. All semigroups and monoids we consider are finite, except when stated otherwise.
For a finite alphabet $A$, the set $A^+$ forms a semigroup under concatenation, while the set $A^*$ is a  monoid with the empty word $\epsilon$ as the identity. 
The set $S' \subseteq S$ forms a \emph{subsemigroup} of $S$ if $S'$ is closed under the semigroup operation. 

 Let $(T,+)$ be a semigroup. A \emph{semigroup morphism} from $S$ to $T$ is a map $h\colon S \rightarrow T$ that satisfies the equation $h(xy)=h(x)+h(y)$, for all $x,y \in S$. 
If $S$ and $T$ are monoids, then $h$ is a \emph{monoid morphism} if additionally $h$ maps the identity of $S$ to the identity of $T$. A semigroup $T$ {\em divides} the semigroup $S$ if $T$ is the  image of a subsemigroup of $S$ under some morphism.

Let $A$ be a finite alphabet. A language $L \subseteq A^+$ is {\em recognized} by a semigroup $S$ if there is a morphism $h\colon A^+ \rightarrow S$ and a subset $P \subseteq S$ such that $L = h^{-1}(P)$. The set $P$ is called the accepting set of $L$. The class of languages recognised by finite semigroups coincide with the class of regular languages \cite{holcombe}, p.~160.

The above definitions are easily adapted to the case of languages over $A^{*}$. The definitions are assumed in the subsequent sections.

\subsection{Languages over Involutory Alphabets}

A finite {\em involutory alphabet} $\scrA=(A, \da)$ is a finite alphabet $A$ with a bijection $\da\colon A \rightarrow A$  such that $(a^\da)^\da = a$ for each letter $a \in A$. The function $\da$ is extended to words over $A$ as
\begin{equation}
\label{eqn:defn-da}
w^\da = a_n^\da \ldots a_1^\da~,
\end{equation}
if $w=a_{1}\cdots a_{n} \in A^{+}$.  Clearly, ${w^{\da}}^{\da}=w$. Moreover, $(uv)^{\da}=v^{\da}u^{\da}$, for all words $u,v \in A^{+}$. Hence, the operation $\da \colon A^{+}\rightarrow A^{+}$ is an \emph{involution}: a function that is its own inverse. If $L \subseteq A^{+}$ is a language, then $L^{\da}$ denotes the set $\{ w^\da \mid w \in L\}$.

If $\da$ is the identity function on $A$, then $\scrA$ is called a \emph{hermitian} alphabet. In that case, $\da$ is the reverse operation $u \mapsto u^{r}$, for $u\in A^{+}$, and $L^\da$ is the set $L^{r}$, for $L\subseteq A^{+}$. Unless specified, a finite alphabet is taken to be a hermitian alphabet, i.e. with the reverse operation as the involution.

\begin{definition} An involution semigroup (also called a $\star$-semigroup) $\scrS=(S, \star)$ is a semigroup $S$ extended with an operation ${\star} \colon S \rightarrow S$ such that for all elements $a,b$ of $S$,
\[ 
  \(a^\star\)^\star = a~, \text{ and } \(a \cdot b\)^\star = b^\star \cdot a^{\star}~.
\]
\end{definition}

We say that $S$ is the semigroup reduct of $\scrS$. The semigroup $A^{+}$ with the operation $\da \colon A^{+} \rightarrow A^{+}$, defined in (\ref{eqn:defn-da}), forms an involution semigroup, denoted as $\scrA^+$. Similarly,  $A^{*}$ with  $\da$ forms a monoid with an involution, denoted as $\scrA^{*}$. The involution semigroup $\scrA^+$ is \emph{not} a \emph{free} involution semigroup in general; it is only so when the relation $\da$ is irreflexive on $A$ (\cite{freesem}, p.~172).

\begin{example}
\label{example:inv}
Let $T = \{a, b, ab, ba\}$ be the semigroup with the multiplication 
\[
aa = a~,\quad bb=b~,\quad aba=bab=ba.
\]
Let $\star$ be the map $a^\star = b$. The map extends to an involution on $T$, 
\[
b^\star = (a^\star)^\star = a~,\quad (ab)^\star = b^\star a^\star = ab~,\quad (ba)^\star=a^{\star}b^{\star}=ba~.
\]
However, the map $a^\da = a, b^\da=b$ is not a well-defined involution on $T$~; 
\[
(ba)^\da =a^{\da}b^{\da}=ab  \neq ba = aba = a^{\da}b^{\da}a^{\da} = (aba)^{\da} = (ba)^{\da}~.
\]
\end{example}

It is not possible to associate an involution with every semigroup; for instance, the semigroup of right zeros, where the multiplication follows the identity $xy =y$ for any elements $x,y$, does not admit an involution if it has more than one element. However, it is possible to obtain a $\star$-semigroup, from a given semigroup, that shares similar properties, as we shall see later.

Let $\scrS=(S, \star)$ and $\scrT = (T, \dagger)$ be involution semigroups. A morphism of involution semigroups,  also called an involutory morphism, from $\scrS$ to $\scrT$, denoted as $h\colon\scrS \rightarrow \scrT$, is a semigroup morphism $h \colon S \rightarrow T$ such that $h(a^\star)=h(a)^\dagger$, for all $a \in S$.

\begin{definition} Let $\scrA=(A, \da)$ be an involutory alphabet.
The involution semigroup $\scrS=(S, \star)$ \emph{recognises} the language $L \subseteq A^{+}$ if there is a morphism of involution semigroups $h\colon\scrA^{+}\rightarrow \scrS$ and a subset $P\subseteq S$ such that $L=h^{-1}(P)$.  
\end{definition}

 \begin{example}
Let $\scrA$ be the involutory alphabet $\{a,b\}$ with the map $a^\star = b$. 
Let $T = \{a, b, ab, ba\}$ be the semigroup from Example \ref{example:inv}. Then, 
$T$ accepts the language $L=a^+b^+$ with the morphism $h(a)=a, h(b)=b$ and the accepting set $\{ab\}$. 
\end{example}

Clearly, if a language is recognised by an involution semigroup, then it is recognised by a semigroup as well. The other direction also holds, as we show next.

Let $\scrS= (S, \da)$ be an involution semigroup and let $T$ be a semigroup. 
By $T^{\op}$, we denote the opposite of $T$ --- the semigroup with the same set of elements but reversed multiplication, i.e., $a\cdot^{\op} b = b \cdot a$, for all $a,b\in T$. Assume there is a semigroup morphism from $S$ to $T$. Let $h^{\da}$ be the map from $S$ to $T^{\op}$, given by, for each $s\in S$,
\[
h^{\da}(s) = h(s^{\da})~.
\]
\begin{lemma}
\label{lemma:h-star}
If $h\colon S\rightarrow T$ is a morphism, then $h^{\da}\colon S \rightarrow T^{\op}$ is also a morphism.
\end{lemma}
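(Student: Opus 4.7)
The plan is to verify the semigroup morphism condition for $h^{\da}$ directly, by chaining together three facts: the involution identity in $\scrS$, the assumption that $h$ is a semigroup morphism into $T$, and the definition of the opposite multiplication in $T^{\op}$. Concretely, I would pick arbitrary $a,b \in S$ and compute $h^{\da}(ab)$ by first pushing the involution inside using $(ab)^{\da} = b^{\da} a^{\da}$, then applying $h$ as a morphism to split this into a product in $T$, and finally recognising that product as an opposite product in $T^{\op}$.

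The single calculation to carry out is
\[
h^{\da}(ab) \;=\; h\bigl((ab)^{\da}\bigr) \;=\; h\bigl(b^{\da} a^{\da}\bigr) \;=\; h(b^{\da})\, h(a^{\da}) \;=\; h^{\da}(b)\, h^{\da}(a) \;=\; h^{\da}(a) \cdot^{\op} h^{\da}(b),
\]
where the first equality is the definition of $h^{\da}$, the second uses the antimorphism property of $\da$ on $S$, the third uses that $h \colon S \to T$ is a semigroup morphism (with multiplication in $T$), the fourth is again the definition of $h^{\da}$, and the last is the definition of the opposite multiplication $a \cdot^{\op} b = b \cdot a$ in $T^{\op}$. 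This is all that needs to be checked, since a semigroup morphism requires no further structure.

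There is no genuine obstacle here: the statement is essentially a bookkeeping observation about how the involution in $\scrS$ interacts with a choice of target semigroup. The only thing to be careful about is not to confuse the multiplication of $T$ with the multiplication of $T^{\op}$; writing $\cdot^{\op}$ explicitly keeps the bookkeeping transparent. Note also that the lemma makes no use of any involution structure on $T$, which is why the target is stated as a plain semigroup and the reversal is absorbed into passing from $T$ to $T^{\op}$.
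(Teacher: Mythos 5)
Your proof is correct and is essentially identical to the paper's own argument: both compute $h^{\da}(ab)=h((ab)^{\da})=h(b^{\da}a^{\da})=h(b^{\da})h(a^{\da})$ and then reinterpret this product in $T$ as the opposite product $h^{\da}(a)\cdot^{\op}h^{\da}(b)$ in $T^{\op}$. Making the $\cdot^{\op}$ explicit is a harmless clarification of the same bookkeeping.
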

\begin{proof}
	For $x,y \in S$,
	\begin{align*}
		h^{\da}(xy)&=h((xy)^{\da})\\
		&=h(y^{\da} x^{\da})\\
		&=h(y^{\da})h(x^{\da}) && \text{(product in $T$)}\\
		&=h^{\da}(x)h^{\da}(y) && \text{(product in $T^{\op}$)}~.
	\end{align*}              
	Hence $h^{\da}\colon S \rightarrow T^{\op}$ is a semigroup morphism. 
\end{proof}

Let $\scrA = (A, \da)$ be an involutory alphabet. Assume that $L \subseteq A^{+}$ is recognised by the semigroup $S$ with the morphism $h$ and the accepting set $P$. Lemma \ref{lemma:h-star} gives that $h^{\da}\colon A^{+}\rightarrow S^{\op}$ is also a morphism; by definition, $h^{\da}(w^{\da}) = h(w)$, for all $w\in A^{+}$. Hence, $h^{\da}(L^{\da})=h(L)=P$; hence, the semigroup $S^{\op}$ recognises $L^{\da}$ with the morphism $h^{\da}$.

Next, we show that $\(T \times T^{op}, \flip\colon (x,y) \mapsto (y,x)\)$ is an involution semigroup. It suffices to show 
\begin{lemma}
\label{lemma:s-s-op}
 $\flip$ is an involution on $T \times T^{\op}$.
\end{lemma}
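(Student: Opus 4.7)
The plan is to unpack the definition of an involution and verify the two required axioms directly from the componentwise multiplication on $T \times T^{\op}$, taking care with the reversed product in the second coordinate.

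First, I would note that $\flip$ is clearly self-inverse: for any $(x,y) \in T \times T^{\op}$, one has $\flip(\flip(x,y)) = \flip(y,x) = (x,y)$, so $(a^{\flip})^{\flip} = a$ for every $a$ in the product semigroup.

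The substantive step is the antimultiplicativity axiom $((x_1,y_1)(x_2,y_2))^{\flip} = (x_2,y_2)^{\flip}(x_1,y_1)^{\flip}$. I would compute both sides separately. By definition of the direct product and of the opposite multiplication, $(x_1,y_1)(x_2,y_2) = (x_1 x_2,\; y_1 \cdot^{\op} y_2) = (x_1 x_2,\; y_2 y_1)$, so applying $\flip$ gives $(y_2 y_1,\; x_1 x_2)$. For the right-hand side, $(x_2,y_2)^{\flip}(x_1,y_1)^{\flip} = (y_2,x_2)(y_1,x_1) = (y_2 y_1,\; x_2 \cdot^{\op} x_1) = (y_2 y_1,\; x_1 x_2)$. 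The two expressions coincide.

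There is no real obstacle here; the only thing to watch is bookkeeping with the opposite multiplication in the second coordinate, which is precisely what makes $\flip$ antimultiplicative rather than multiplicative. Combined with the involution property $(a^{\flip})^{\flip} = a$, this establishes that $\flip$ is an involution on $T \times T^{\op}$, completing the proof.
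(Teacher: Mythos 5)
Your proof is correct and follows essentially the same route as the paper's: both verify that $\flip$ is self-inverse and then check antimultiplicativity by direct componentwise computation, with the opposite multiplication in the second coordinate doing the work. Your version is slightly more explicit about where $\cdot^{\op}$ enters, but the argument is the same.
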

\begin{proof}
	Clearly, for $(a,b) \in T\times T^{\op}$
	\[
	\((a,b)^{\flip}\)^{\flip}=(a,b)~.
	\] 
	Let $(a,b),(c,d) \in T\times T^{\op}$. Then,
	\begin{align*}
		((a,b)(c,d))^{\flip}&=(ac,db)^{\flip}\\
		&=(db,ac)\\
		&=(d,c)(b,a)\\
		&=(c,d)^{flip}(a,b)^{\flip}~.
	\end{align*}                 
	Hence $\flip$ is an involution on $T \times T^{\op}$.
\end{proof}

\begin{proposition}
\label{prop:sem-to-invsem}
If a language is recognised by a semigroup, then it is recognised by a $\star$-semigroup as well.
\end{proposition}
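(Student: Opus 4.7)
The plan is to combine Lemmas~\ref{lemma:h-star} and \ref{lemma:s-s-op} into a single recognising involutory morphism. Suppose $L \subseteq A^+$ is recognised by the semigroup $S$ via a morphism $h \colon A^+ \to S$ with accepting set $P$. By Lemma~\ref{lemma:h-star} the map $h^\da \colon A^+ \to S^{\op}$ is also a semigroup morphism, and by Lemma~\ref{lemma:s-s-op} the pair $(S \times S^{\op}, \flip)$ is an involution semigroup. I would therefore define
\[
H \colon \scrA^+ \longrightarrow (S \times S^{\op}, \flip), \qquad H(w) = \bigl(h(w),\, h^\da(w)\bigr),
\]
and argue that $H$ does the job.

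First I would check that $H$ is a semigroup morphism, which is immediate componentwise: in the first coordinate $h$ is a morphism, and in the second coordinate $h^\da$ is a morphism into $S^{\op}$ by Lemma~\ref{lemma:h-star}. Next I would verify that $H$ respects the involutions, namely $H(w^\da) = H(w)^{\flip}$. Unfolding definitions, $H(w^\da) = (h(w^\da),\, h((w^\da)^\da)) = (h^\da(w), h(w))$, while $H(w)^{\flip} = (h(w), h^\da(w))^{\flip} = (h^\da(w), h(w))$, so the two agree. Thus $H$ is an involutory morphism from $\scrA^+$ to $(S \times S^{\op}, \flip)$.

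For recognition it then suffices to take $P' = P \times S \subseteq S \times S^{\op}$ as the accepting set, since
\[
H^{-1}(P') = \{ w \in A^+ \mid h(w) \in P \} = h^{-1}(P) = L.
\]
Hence $L$ is recognised by the involution semigroup $(S \times S^{\op}, \flip)$, as required.

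There is no substantial obstacle here; the construction is essentially forced by the lemmas and the only points that need verification are the two straightforward algebraic checks above. The mild subtlety worth noting is that one should not try to put an involution directly on $S$ itself (as Example~\ref{example:inv} illustrates, this need not exist), so the doubling trick $S \times S^{\op}$ is the natural device for converting an arbitrary semigroup recogniser into a $\star$-semigroup recogniser at the (harmless) cost of squaring the size.
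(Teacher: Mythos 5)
Your proof is correct and follows essentially the same route as the paper: both build the involutory morphism $w \mapsto (h(w), h^{\da}(w))$ into $(S \times S^{\op}, \flip)$ using Lemmas~\ref{lemma:h-star} and \ref{lemma:s-s-op}. The only (immaterial) difference is the choice of accepting set --- you take $P \times S$, while the paper takes the image $\{(h(w), h^{\da}(w)) \mid w \in L\}$; both have preimage exactly $L$.
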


\begin{proof}
	Let $\scrA = (A, \da)$ be an involutory alphabet. Assume that $L$ is recognised by $S$ using the morphism $h\colon A^{+} \rightarrow S$. By Lemma \ref{lemma:h-star}, $h^{\da}\colon A^{+} \rightarrow S^{\op}$ is a morphism. By Lemma \ref{lemma:s-s-op}, $(S \times S^{\op}, \flip)$ is an involution semigroup. Then, the product map $g\colon w \mapsto (h(w), h^{\da}(w))$, for $w \in A^{+}$, is a morphism from $A^{+}$ to $S\times S^{op}$. 
	Since 
	\[
	g(w^{\da})=(h(w^{\da}),h^{\da}(w^{\da}))=(h^{\da}(w),h(w))=(g(w))^{\flip}~,
	\]
	$g$ is a morphism of involution semigroups from $\scrA^{+}$ to $S\times S^\op$. Finally, the proof is completed by observing that $L$ is recognised by the involution semigroup $S\times S^{op}$ with the accepting set $\{ (h(w),h^{\da}(w)) \mid w \in L\}$.
\end{proof}

Therefore, languages recognised by involution semigroups are precisely the class of recognisable languages.

 Let $\scrS = (S, \star)$  be an involution semigroup. 
If $S' \subseteq S$, then $S'^{\star}$ denotes the set $\{ s^{\star} \mid s\in S'\}$.
A subset $S'$ of $S$ forms a \emph{sub-$\star$-semigroup} of $\scrS$ if 
$S'$ is closed under the semigroup operation and the involution, i.e., if $S'^{2} \subseteq S'$ and $S'^{\star} = S'$. If $h\colon\scrS \rightarrow \scrT$ is an involution semigroup morphism, then the image of $h$ is a sub-$\star$-semigroup of $\scrT$.

An element $s \in S$ is \emph{hermitian} if it is  its own involution, i.e., ~if $s^{\star}=s$. A $\star$-semigroup is hermitian-generated if it is generated by its hermitian elements using the semigroup operation and the involution. 
For convenience, we shorten hermitian-generated involution semigroup to \emph{hermitian semigroup}. For every hermitian alphabet $\scrA$, the involution semigroup $\scrA^{+}$ is hermitian. Images of hermitian semigroups under involution semigroup morphisms are hermitian as well.

\begin{example}
In the involution semigroup $\scrT = (T, \star)$ in Example \ref{example:inv}, the set of hermitian elements is $\{ab,ba\}$. Since they don't generate all the elements of the semigroup, $\scrT$ is not a hermitian semigroup.  But $(\{ab, ba\}, \star)$ is a sub-$\star$-semigroup of $\scrT$ that is hermitian.
\end{example}

For recognising languages that are subsets of $A^{*}$, the notion of recognisability by $\star$-semigroups is insufficient; one needs recognisability by $\star$-monoids. The definitions are straightforward adaptations of the ones presented above.

%%%%%%%%%%%%%%%%%%%%%%%%%%%%%%%%%%%%%%%%%%%%%%%%%%%%%%%%%%%%%%%%%%%%%%%%%%%%%%%
%. SECTION 3
%%%%%%%%%%%%%%%%%%%%%%%%%%%%%%%%%%%%%%%%%%%%%%%%%%%%%%%%%%%%%%%%%%%%%%%%%%%%%%%

\section{Involutory Semidirect Products and The Characterisation of FO}
\label{Sec:4} 

In this section, we first define the involutory semidirect product of  two involution semigroups. Then, a particular case, called locally hermitian semidirect product, is defined. It is then shown that $\lrtt$ languages are precisely the ones recognised by locally hermitian products of aperiodic commutative $\star$-semigroups and locally trivial $\star$-semigroups.

\subsection{Bilateral Semidirect Products of Involution Semigroups}

Let $S$ and $T$ be finite semigroups. We denote the semigroup operation of $S$ additively (by $+$) and of $T$ multiplicatively.
However, we don't assume that $S$ or $T$ is commutative.
 A \emph{left action  $l$ of $T$ on $S$} is a map $l\colon T \times S \rightarrow S$ satisfying the following conditions. Denote the image of $(t, s)$ under $l$ as $ts$,  for $t\in T, s\in S$. Then, for all $s \in S$ and $t,t' \in T$,
\begin{align}
t(s+s') &= ts + ts'~, \text{ and},\\
(tt')s &= t (t's)~.
\end{align}

\emph{Right actions} of $T$ on $S$ are defined analogously; a \emph{right action  $r$ of $T$ on $S$} is a map $r\colon S \times T \rightarrow S$ such that for all $s \in S$ and $t,t' \in T$,
\begin{align}
(s+s')t &= st + s't~, \text{ and},\\
s(tt') &= (st)t'~,
\end{align}
where $st$ denotes the image of $(s,t)$ under the map $r$.

The actions $l$ and $r$ are {\em compatible} if for all $t, t' \in T$, and $s \in S$, it is the case that 
\begin{align}\label{eqn:compatibility}
(ts)t' = t(st')~. 
\end{align}

A pair of compatible actions $(l,r)$ is called a \emph{bilateral action} of $T$ on $S$.
Given a bilateral action $(l,r)$ of $T$ on $S$, the \emph{bilateral semidirect product} (also called the {\em two-sided semidirect product}) $S\sdp T$  is the semigroup with the 
elements $\{ (s, t) \mid s\in S, t \in T\}$ and with the operation 
\begin{align}\label{eqn:sdp}
(s_1 ,t_1)(s_2,t_2)=(s_1t_2+ t_1 s_2, t_1 t_2)~.
\end{align}

Let $\scrS = (S, \star)$ and $\scrT =(T, \diamond)$ be involution semigroups. Let $(l,r)$ be a bilateral action of $T$ on $S$. Then, the pair $(l,r)$ is an \emph{involutory action} of $\scrT$ on $\scrS$ if for all $t,t' \in T, s\in S$, 
\begin{equation}\label{eqn:inv-compatible-eqn}
 (st)^\star = t^{\diamond}s^\star~.  \\
\end{equation} 
 
\begin{lemma}
\label{lemma:inv-sem-verify}
Assume $\scrT = (T, \diamond)$ has an involutory action on $\scrS = (S, \star)$. Then, for all $t,t' \in T, s\in S$, 
\[
(t_{1}st_{2})^\star  = t_{2}^{\diamond}s^\star t_{1}^{\diamond}~.
\]
\end{lemma}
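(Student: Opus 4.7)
The lemma generalises the involutory compatibility condition \eqref{eqn:inv-compatible-eqn}, which governs only the right action, to bilateral expressions in which $T$ acts on $S$ from both sides. My plan is to first extract from \eqref{eqn:inv-compatible-eqn} its left-action counterpart $(ts)^\star = s^\star t^\diamond$, and then combine the two using the compatibility axiom $(t_1 s) t_2 = t_1 (s t_2)$ that ensures the expression $t_1 s t_2$ is unambiguously defined.

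For the left-action identity, I substitute $s \mapsto s^\star$ and $t \mapsto t^\diamond$ into \eqref{eqn:inv-compatible-eqn}. Since $s^\star \in S$ and $t^\diamond \in T$, the substitution is valid, and \eqref{eqn:inv-compatible-eqn} yields $(s^\star t^\diamond)^\star = (t^\diamond)^\diamond (s^\star)^\star = t s$. Applying $\star$ to both sides and using $(x^\star)^\star = x$, I conclude $(ts)^\star = s^\star t^\diamond$, the mirror-image identity for the left action.

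With both identities in hand, the lemma reduces to a short calculation. Using compatibility I parse $t_1 s t_2$ as $(t_1 s) t_2$; then \eqref{eqn:inv-compatible-eqn}, applied to the right action of $t_2$ on $t_1 s \in S$, gives $((t_1 s) t_2)^\star = t_2^\diamond (t_1 s)^\star$. The derived left-action identity rewrites $(t_1 s)^\star$ as $s^\star t_1^\diamond$, producing $t_2^\diamond s^\star t_1^\diamond$ as required. (Parsing instead as $t_1 (s t_2)$ and applying the two identities in the opposite order gives the same answer, a useful sanity check.)

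There is no genuine obstacle here; the only subtlety is recognising that \eqref{eqn:inv-compatible-eqn} already implicitly encodes its left-sided dual, because applying the involutions $\star$ and $\diamond$ to the two ingredients of a right action effectively flips it into a left action on the involuted arguments.
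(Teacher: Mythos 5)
Your proposal is correct and follows essentially the same route as the paper: both derive the left-action identity $(ts)^\star = s^\star t^\diamond$ from the right-action axiom \eqref{eqn:inv-compatible-eqn} by substituting the involuted arguments and applying $\star$, and then conclude via the same two-step rewriting of $((t_1 s)t_2)^\star$. The only difference is the order in which you apply $\star$ and perform the substitution, which is immaterial.
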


\begin{proof}
	By taking $\star$ on both sides in (\ref{eqn:inv-compatible-eqn}), we get
	\begin{equation}\label{eqn:inv-compatible-eqn1}
		st = (t^{\diamond}s^\star)^{\star}~.  \\
	\end{equation} 
	Exchanging $t$ and $t^{\diamond}$, and $s$ and $s^{\star}$ in (\ref{eqn:inv-compatible-eqn1}), we obtain
	\begin{equation}\label{eqn:inv-compatible-eqn2}
		s^{\star}t^{\diamond} = (ts)^{\star}~.  \\
	\end{equation} 
	Finally,
	\begin{align*}
		(t_{1}st_{2})^\star  &= ((t_{1}s)t_{2})^\star && \text{by (\ref{eqn:compatibility})} \\
		&= t_{2}^{\diamond} (t_{1}s)^{\star} && \text{by (\ref{eqn:inv-compatible-eqn})}\\
		&= t_{2}^{\diamond} s^{\star} t_{1}^{\diamond}~. && \text{by (\ref{eqn:inv-compatible-eqn2})}
	\end{align*}
\end{proof}

\begin{definition}
Assume  $\scrT=(T,\diamond)$ has an involutory action on $\scrS=(S, \star)$. Then, their \emph{involutory semidirect product} $\scrS \sdp \scrT$ is the semigroup $S \sdp T$ with the involution 
\begin{equation}\label{eqn:inv-sdp}
\dag\colon (s,t) \mapsto (s^\star, t^\diamond)~.
\end{equation}
\end{definition}

\begin{lemma}
\label{lemma:sdp-inv-semigroup}
 $\scrS \sdp \scrT$ is an involution semigroup.
\end{lemma}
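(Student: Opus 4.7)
The plan is to verify the two axioms of an involution on top of the already-defined semigroup structure of $\scrS \sdp \scrT$. Since $S \sdp T$ is a semigroup by equation (\ref{eqn:sdp}) (associativity is built into the bilateral semidirect product construction), I only need to check that the map $\dag\colon (s,t)\mapsto (s^\star, t^\diamond)$ is its own inverse and that it reverses products.

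\emph{Step 1: $\dag$ is an involution as a set map.} This is immediate componentwise:
\[
\((s,t)^\dag\)^\dag = \((s^\star)^\star, (t^\diamond)^\diamond\) = (s,t),
\]
using that $\star$ and $\diamond$ are themselves involutions on $\scrS$ and $\scrT$.

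\emph{Step 2: $\dag$ is an anti-homomorphism.} I would unfold the product $(s_1,t_1)(s_2,t_2) = (s_1 t_2 + t_1 s_2,\ t_1 t_2)$ from (\ref{eqn:sdp}), apply $\dag$, and rewrite using three facts: (a) $\star$ reverses the semigroup operation $+$ on $S$, giving $(s_1 t_2 + t_1 s_2)^\star = (t_1 s_2)^\star + (s_1 t_2)^\star$; (b) equation (\ref{eqn:inv-compatible-eqn}), namely $(st)^\star = t^\diamond s^\star$, together with its twin $(ts)^\star = s^\star t^\diamond$ derived in the proof of Lemma \ref{lemma:inv-sem-verify}, to rewrite each of the two summands; and (c) $\diamond$ reverses multiplication on $T$, so $(t_1 t_2)^\diamond = t_2^\diamond t_1^\diamond$. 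Putting this together yields
\[
\((s_1,t_1)(s_2,t_2)\)^\dag = \(s_2^\star t_1^\diamond + t_2^\diamond s_1^\star,\ t_2^\diamond t_1^\diamond\).
\]
On the other side, expanding $(s_2,t_2)^\dag (s_1,t_1)^\dag = (s_2^\star, t_2^\diamond)(s_1^\star, t_1^\diamond)$ directly via (\ref{eqn:sdp}) produces exactly the same pair, finishing the argument.

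There is no real obstacle here; the lemma is bookkeeping. The only subtlety worth flagging explicitly is that although the operation on $S$ is written additively, it is not assumed commutative, so $\star$ acts as an anti-homomorphism on $+$ and the order of the two summands in $s_1 t_2 + t_1 s_2$ must be swapped when $\star$ is applied. Once that is respected, the matching of the left- and right-action terms lines up cleanly because the involutory action axiom (\ref{eqn:inv-compatible-eqn}) is precisely the compatibility needed to turn $s_1 t_2$ into $t_2^\diamond s_1^\star$ (the right half of the new pair) and $t_1 s_2$ into $s_2^\star t_1^\diamond$ (the left half).
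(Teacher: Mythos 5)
Your proposal is correct and follows essentially the same route as the paper's proof: verify $\bigl((s,t)^\dag\bigr)^\dag=(s,t)$ componentwise, then expand the product via (\ref{eqn:sdp}), apply $\star$ as an anti-homomorphism on the (non-commutative) sum, rewrite the two summands using (\ref{eqn:inv-compatible-eqn}) and its twin from Lemma \ref{lemma:inv-sem-verify}, and match against $(s_2,t_2)^\dag(s_1,t_1)^\dag$. Your explicit flag about the order of summands being swapped under $\star$ is exactly the step the paper performs silently in its third displayed line.
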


\begin{proof}
	It suffices to show that $\dag$ is an involution.
	By definition, 
	\[
	{(s,t)^\dag}^\dag = ({s^{\star}}^{\star},{t^{\diamond}}^{\diamond})=(s,t)~.
	\]
	Next, for all $s_1, s_2 \in S, t_1,t_2\in T$,
	\begin{align*}
		\((s_1, t_1)(s_2,t_2)\)^\dag &= (s_1t_2+ t_1 s_2, t_1 t_2)^\dag && \text{by (\ref{eqn:sdp})}\\
		&= \((s_1t_2+ t_1 s_2)^\star, (t_1 t_2)^\diamond\)  && \text{by (\ref{eqn:inv-sdp})}\\
		&= \((t_1 s_2)^\star + (s_1t_2)^\star, t_2^\diamond t_1^\diamond\)\\
		&= \(s_2^\star t_1^\diamond + t_2^\diamond s_1^\star, t_2^\diamond t_1^\diamond\)\\
		&= \(s_2^\star, t_2^\diamond\)\(s_1^\star, t_1^\diamond\)\\
		&= \(s_2, t_2\)^\dag \(s_1, t_1\)^\dag~.
	\end{align*}
	Hence $\dag$ is an involution.
\end{proof}

Next, we introduce a particular case of involutory semidirect product. 

An element $e$ in a semigroup is an \emph{idempotent} if $e\cdot e = e$. In the case of involution semigroups, if $e$ is an idempotent, then involution of $e$ is also an idempotent.
\begin{definition}
Let $\scrS = (S, \star)$ and $\scrT =(T, \diamond)$ be involution semigroups. Let $(l,r)$ be an involutory action of $\scrT$ on $\scrS$. The action is \emph{locally hermitian} if for each idempotent $e \in T$, and each element $s\in S$, 
\begin{align} \label{eqn:local}
ese^\diamond &= e s^\star e^\diamond~.
\end{align}
\end{definition}

In other words, in the case of a locally hermitian action, the elements of the form $ese^{\diamond}$ are hermitian
(observe that, by Lemma \ref{lemma:inv-sem-verify}, $\(ese^{\diamond}\)^{\star} = 
{e^{\diamond}}^{\diamond}s^{\star}e^{\diamond}=es^{\star}e^{\diamond}$).
 
Let $\scrS$ and $\scrT$ be involution semigroups. If the action of $\scrT$ is locally hermitian action on $\scrS$, then the semidirect product $\scrS \sdp \scrT$ is called \emph{locally hermitian}.

\subsection{Characterisation of FO with Neighbour}

A semigroup $S$ is \emph{aperiodic} if there is an $n\in \bbN$ such that $a^{n+1}=a^{n}$, for all elements $a$ in the semigroup. A semigroup is \emph{commutative} if the semigroup operation is commutative, i.e., $x\cdot y = y \cdot x$, for all elements $x$ and $y$. A semigroup $S$ is locally trivial if $ese=e$, for each element $s$ and each idempotent $e$ in $S$. The class of languages $\ltt$ has several characterisations in terms of various products of semigroups (see \cite{BeauquierPin} for a detailed discussion); the one relevant to us is,

\begin{proposition}
\label{prop:ltt-char}
A language is in $\ltt$ if and only if it is recognised by a bilateral semidirect product of an aperiodic commutative semigroup and a locally trivial semigroup.
\end{proposition}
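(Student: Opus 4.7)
The plan is the classical synthesis-and-analysis argument for LTT, adapted to bilateral semidirect products. Given $L \in \ltt$ that is a union of $\approx_k^t$-classes, I would first build the \emph{scanner} $T := A^{<k} \cup (A^{k-1} \times A^{k-1})$, where short strings represent themselves and pairs $(u,v)$ represent long words with prefix $u$ and suffix $v$; multiplication concatenates, then truncates to the outer $k-1$ letters once the length exceeds $k-1$. A direct check gives $ete=e$ for every idempotent $e$ and element $t$, so $T \in \LI$. The \emph{counter} $S$ would be the semigroup of maps $A^{\leq k}\to\{0,\dots,t\}$ under componentwise $t$-truncated addition, manifestly aperiodic and commutative. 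Defining the bilateral action of $T$ on $S$ so that sandwiching an $S$-element with $T$-contexts records the extra factors of length $\leq k$ that straddle the boundaries, and sending each letter $a$ to $(\chi_{\{a\}},a)$, produces a morphism $\varphi : A^+ \to S \sdp T$ whose value on $w$ encodes $(\pref_{k-1}(w),\suff_{k-1}(w))$ together with the $t$-capped factor counts, so $L$ is recognised.

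For the converse, suppose $L=\varphi^{-1}(P)$ with $\varphi : A^+ \to S \sdp T$, $S \in \Acom$, $T \in \LI$. Choose $k$ so that every product of $k-1$ or more generators in $T$ absorbs its neighbours in the way demanded by local triviality, and $t$ so that $x^{t+1}=x^t$ in $S$. I would then show that $u \approx_k^t v$ implies $\varphi(u)=\varphi(v)$. Local triviality of $T$ forces the $T$-component to depend only on $\pref_{k-1}(w)$ and $\suff_{k-1}(w)$. For the $S$-component, iterating the product rule $(s_1,t_1)(s_2,t_2)=(s_1t_2+t_1 s_2,\, t_1 t_2)$ expresses $\varphi(w)$ as a sum of position-indexed contributions, each depending only on the length-$k$ window around its position after absorbing outer $T$-contexts by local triviality. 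Commutativity and threshold aperiodicity in $S$ then reduce this sum to a function of the multiset of length-$\leq k$ factors of $w$ up to threshold $t$, giving the conclusion.

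The main technical obstacle is the backward expansion into a position-indexed commutative sum: one must use the compatibility identity $(ts)t'=t(st')$, the idempotent absorption in $T$, and the threshold law in $S$ in concert, showing that each length-$k$ factor contributes exactly one well-defined summand and that endpoint contributions get absorbed into the prefix-suffix data rather than double-counted. This bookkeeping is where most of the real work lies; the synthesis step is largely formal once the scanner and counter are set up correctly.
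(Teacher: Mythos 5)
The paper does not actually prove this proposition; it quotes it from Beauquier--Pin (\cite{BeauquierPin}) as one of the known characterisations of \ltt. Your outline is the standard scanner/counter decomposition behind that result, and it is also exactly the template the paper follows when it \emph{does} write out the analogous argument for $\lrtt$ in Theorem \ref{lemma:lrtt-acomli}: a locally trivial semigroup of truncated prefix--suffix data, a commutative aperiodic semigroup of threshold-counted window occurrences, and a converse that expands $\varphi(w)$ into the sum $\sum_i p_i s_i q_i$, truncates the contexts by local triviality, and invokes commutativity and aperiodicity. Your converse direction is essentially correct as sketched (modulo the harmless mismatch between the window length $2k+1$ arising from $k$-truncated contexts and the factor length $k$ in $\approx_k^t$, which is absorbed by passing to a finer equivalence).

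There is, however, a genuine flaw in your synthesis step: the counter $S$ cannot be the semigroup of maps $A^{\leq k}\to\{0,\dots,t\}$ if the action of $T$ is to ``record the factors that straddle the boundary.'' A left action must satisfy both $t(s+s')=ts+ts'$ and $(tt')s=t(t's)$, and these are incompatible with that bookkeeping on bare count vectors. Concretely, if $b\cdot\chi_{\{a\}}=\chi_{\{a\}}+\chi_{\{ba\}}$, then additivity forces $c\cdot(b\cdot\chi_{\{a\}})$ to contain a spurious summand $\chi_{\{ca\}}$, because the count vector has forgotten how much left context the occurrence of $a$ has already absorbed. This is precisely why the paper's construction (and Beauquier--Pin's) takes the generators of $S$ to be \emph{anchored} partial factors $u\hat{a}v$ with $|u|,|v|\leq k$ --- elements that remember the distinguished position and the context absorbed so far, with the action appending letters until the window saturates --- and only then quotients by a threshold congruence on multisets of such elements. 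Your plan goes through once the counter is replaced by that hermitian-free bookkeeping device; as written, the action you need does not exist on the carrier you chose, and this is where the ``largely formal'' synthesis step actually hides the work.
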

 Using the construction outlined in the proof of Proposition \ref{prop:sem-to-invsem}, it can be shown that a language is in $\ltt$ if and only if it is recognised by an involutory semidirect product of an aperiodic commutative involution semigroup and a locally trivial involution semigroup. 
 
Next, we characterise the class $\lrtt$. First, we give an example.

\begin{example}
Consider the language $L = (abc)^{+}$ over the alphabet $A=\{a,b,c\}$. We show that $L$ is defined by a locally hermitian product of two $\star$-semigroups $\scrS$ and $\scrT$. 

Let $T$ be the semigroup with the set $A^{\leq 2} \setminus \{ \epsilon \}$ and the operation $\odot$ given by, for $x,y \in T$
\[
x \odot y = \begin{cases}
x\cdot y & \text{ if $|xy| \leq 2$}\\
\mathit{first}(xy)\cdot \mathit{last}(xy) & \text{ otherwise},
\end{cases}
\]
where $\mathit{first}(w)$ and $\mathit{last}(w)$ denote the first and last letter of a word $w$. Reverse operation is an involution on the semigroup $T$. Let $\scrT = (T, r)$. Idempotents of $T$ are precisely those words of length $2$.

Let $\hat{A} = \{ \hat{a}, \hat{b}, \hat{c}\}$. Let $U$ be the set $A^{\leq 1}\cdot \hat{A}\cdot  A^{\leq 1}$. 
Elements of $T$ act on words in $U$ in the following way: for $w\in T$ and $x\hat{y}z \in U$, where $x,z \in A^{\leq 1}$ and $\hat{y}\in \hat{A}$,  we define the right action  $\otimes$ as
\[
w \otimes  x\hat{y}z= \begin{cases}
\mathit{last}(w)\cdot \hat{y} z & \text{ if $x$ is $\epsilon$}\\
 x\hat{y}z & \text{ otherwise},
\end{cases}
\]
The right action is analogous.
\[
x\hat{y}z \otimes w = \begin{cases}
x  \hat{y} \cdot \mathit{first}(w) & \text{ if $z$ is $\epsilon$}\\
 x\hat{y}z & \text{ otherwise},
\end{cases}
\]

Let $S'$ be the semigroup formed by the powerset of $U$ with set union as the semigroup operation. For each element $I\subseteq U$ of $S'$,  let $I^{r} = \{ w^{r} \mid w \in I\}$. The operation $I \mapsto I^{r}$, for $I\in S'$, is an involution on the semigroup $S'$. 
Let $\scrS'=(S', r)$. The action of $T$ on $U$, extends to the elements of $\scrS'$ pointwise. The elements $A\cdot \hat{A}\cdot  A$ are called the \emph{zeros} for the action. 
Let $\sim$ be a congruence on $\scrS'$, given by, for $I,J \in S'$,
\begin{itemize}
\item $x \in I$ if and only if $x\in J$, for every nonzero $x \in U$, and
\item $x$ or $x^{r}$ is in $I$ if and only if $x$ or $x^{r}$ is in $J$,  for each zero $x \in U$.
\end{itemize}
Let $\scrS$ be the quotient of $\scrS'$ with respect to $\sim$. 
It is easy to see that for each idempotent $e \in T$, and each $I \in \scrS$, $e\otimes I \otimes e^{r} = e\otimes I^{r} \otimes e^{r}$. Hence the actions are locally hermitian. Finally $L$ is accepted by $\scrS \sdp \scrT$ with the morphism $h\colon a \mapsto \(\{\hat{a}\}, a\)$.

\end{example}
 \begin{theorem}
\label{lemma:lrtt-acomli}
Each $\lrtt$ language is recognised by a locally hermitian semidirect product of an aperiodic commutative $\star$-semigroup and a locally trivial $\star$-semigroup.
Conversely, if $L$ is recognised by a locally hermitian semidirect product  $\scrS \sdp \scrT$,  where $\scrS$ is aperiodic and commutative and  $\scrT$ is locally trivial, then $L$ is $\lrtt$.
\end{theorem}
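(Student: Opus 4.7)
The plan is to mirror the proof of Proposition \ref{prop:ltt-char} (the classical $\ltt$-characterisation) in the involution setting, with the locally hermitian condition supplying the reverse-invariance of interior factor counts that distinguishes $\approxr{t}{k}$ from $\approx_k^t$.

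For the forward direction, given $L$ a union of $\approxr{t}{k}$-classes, I would follow the example preceding the statement to construct an explicit locally hermitian product $\scrS \sdp \scrT$ recognising $L$. The locally trivial $\star$-semigroup $\scrT$ records the prefix and suffix of length at most $k-1$ of a word, with the involution $\diamond$ swapping and reversing the two coordinates so that $h_T(w^\da) = h_T(w)^\diamond$; long words saturate both coordinates and produce idempotents. The aperiodic commutative $\star$-semigroup $\scrS$ is a threshold-$t$ Parikh image indexed by length-$\leq k$ factor classes modulo reversal, with addition as the operation and the trivial involution on its hermitian generators. The bilateral action of $\scrT$ on $\scrS$ handles factors that straddle concatenation boundaries. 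The locally hermitian axiom $ese^\diamond = es^\star e^\diamond$ is exactly what identifies a factor with its reverse when both occur in an interior context framed by dual idempotents --- the only place where $\approxr{t}{k}$ conflates them.

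For the converse, suppose $L = h^{-1}(P)$ for an involutory morphism $h\colon\scrA^+\to\scrS\sdp\scrT$ with $\scrS, \scrT$ as hypothesised. Forgetting the involutions and applying Proposition \ref{prop:ltt-char} gives that $L$ is $\ltt$, hence a union of $\approx_{k_0}^{t_0}$-classes for some $k_0, t_0$. Choosing $k, t$ sufficiently large relative to $k_0, t_0, |\scrS|, |\scrT|$, it suffices to check that $h(\alpha v \beta) = h(\alpha v^r \beta)$ whenever $|\alpha|, |\beta|$ are large enough that long sub-factors of $\alpha$ and $\beta$ adjacent to $v$ project to idempotents in $\scrT$. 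Expanding the $\scrS$-coordinate of the product via (\ref{eqn:sdp}) as a sum of contributions indexed by position, and using that $h_T(w^\da) = h_T(w)^\diamond$ pairs the surrounding idempotents as mutual duals $e$ and $e^\diamond$, the locally hermitian equation together with Lemma \ref{lemma:inv-sem-verify} rewrites each summand coming from a position in $v$ as the corresponding summand for $v^r$. Commutativity of $\scrS$ then recombines the sum into $h(\alpha v^r \beta)$. Iterating this single-factor exchange along a chain connecting any two $\approxr{t}{k}$-equivalent words completes the inclusion.

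The main obstacle is the converse direction, specifically ensuring that the two idempotent brackets around any interior factor are in fact mutual $\diamond$-duals so that the locally hermitian identity applies. Local triviality of $\scrT$ gives idempotent framings on each side individually through the standard Ramsey-type stabilisation, but coupling these into dual pairs requires exploiting the involutory compatibility $h_T(w^\da) = h_T(w)^\diamond$ via a careful choice of sub-factor lengths within $\alpha$ and $\beta$. Balancing $k$ against the local-triviality threshold of $\scrT$, so that every length-$\leq k$ interior factor admits a dual-idempotent framing while prefix and suffix data of length $k-1$ remain available, is the technical heart of the proof.
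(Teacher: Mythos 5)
Your converse direction contains a genuine gap, and it is centred on the intermediate claim that $h(\alpha v \beta) = h(\alpha v^{r} \beta)$ for interior factors $v$ with long contexts. This is false in general: replacing $v$ by $v^{r}$ changes the length-$(2k+1)$ factors that straddle the boundaries between $\alpha$ and $v$ and between $v$ and $\beta$, so the summands contributed by positions of $\alpha$ and $\beta$ near those boundaries change, and the two words need not even be $\approxr{t}{k}$-equivalent (take $\alpha=xx$, $v=abc$, $\beta=yy$ with $k=1$: the factor multisets of $xxabcyy$ and $xxcbayy$ already differ up to reverse). Even if the claim held, the subsequent step --- ``iterating this single-factor exchange along a chain connecting any two $\approxr{t}{k}$-equivalent words'' --- presupposes an unproven and delicate combinatorial fact, namely that $\approxr{t}{k}$-equivalence is generated by interior factor reversals together with \ltt-equivalence; since reversing one occurrence of a factor perturbs the counts of all overlapping factors, there is no reason such a chain exists. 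The paper avoids both problems by never rewriting the word: it normalises each summand $p_i s_i q_i$ of the $\scrS$-coordinate, via local triviality, to a term $f(p_i)\,s_i\,f(q_i)$ that depends only on the length-$(2k+1)$ factor centred at position $i$, shows via the rotation identity (\ref{eq:hermitian-rotate}) and Equation (\ref{Eq:himage}) that this term equals the term associated with the \emph{reversed} factor, and then concludes directly from commutativity and aperiodicity of $\scrS$ that two words with the same multiset of such factors up to reverse and threshold have the same image. Relatedly, the obstacle you single out --- arranging for the two idempotent brackets around a factor to be mutual $\diamond$-duals --- is not an obstacle at all: Equation (\ref{eq:locally-trivial}) lets you \emph{insert} an arbitrary element between two products of length $k$, so one simply inserts a chosen pair $e$, $e^{\diamond}$ and removes it afterwards; no coupling of naturally occurring idempotents is needed.

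The forward direction is closer to the paper's construction (a truncating prefix--suffix semigroup $\scrT$ acting on a threshold-counting commutative semigroup $\scrS$), but your description of $\scrS$ as counting factor classes ``modulo reversal'' with a trivial involution misses the decisive point: the paper counts \emph{anchored} factors and identifies a factor with its reverse \emph{only} for the saturated (``zero'') elements, i.e.\ those at distance $\geq k$ from both ends. Identifying all anchored factors with their reverses is not compatible with the left and right actions of $\scrT$ (prepending a letter to $\hat{w}$ and to $\hat{w}^{r}$ need not produce equivalent elements when $\hat{w}$ is saturated on one side only), so the quotient you describe is not a congruence for the action and the semidirect product is not well defined. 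The restriction of the reversal identification to the zeros is exactly what makes the action locally hermitian while still recording the un-reversed prefix and suffix data needed for $\approxr{t}{k}$.
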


\begin{proof}

	Assume $L \subseteq A^{+}$, for an involutory alphabet $\scrA = (A, \id)$, is a $\lrtt$ language. To prove the first claim, we define $\star$-semigroups $\scrS$ and $\scrT$ such that $\scrS$ is aperiodic  and commutative and $\scrT$ is locally trivial, and show that their locally hermitian semidirect product recognises $L$. 
	We let $k, m>0$ be such that $L$ is a union of $\approxr{m}{2k+1}$-classes. 
	Since the equivalence $\approxr{t}{d+1}$ refines the equivalence  $\approxr{t}{d}$, for $t,d>0$, there is no loss of generality.

	Let $\odot$ be the operation on the set $A^{\leq 2k} \setminus \{\epsilon\}$ defined as

	$$t\odot t' = \begin{cases} t\cdot t' & \text{if $|tt'| < 2k$,}\\
		\pref_k(t\cdot t')\cdot \suff_k(t\cdot t')  & \text{otherwise,}
	\end{cases}$$
	where $\pref_k()$ and $\suff_k()$ denote the $k$-length prefix and suffix respectively. 
	It is not difficult to see that the operation $\odot$ is associative. Hence, $T=(A^{\leq 2k}\setminus \{\epsilon\}, \odot)$ is a semigroup. 
	It is straight-forward to check that $(u \odot v)^{r}=v^{r}\odot u^{r}$ and $(u^{r})^{r}=u$, for the reverse operation $r$. 
	Let $\scrT=(T, r)$ be the involution semigroup with the reverse operation as the involution. The idempotents of $T$ are precisely the words of length $2k$. Also, $e\odot t \odot e=e$, for words $t \in T$ and $e\in A^{2k}$. Hence, $\scrT$ is locally trivial. 
	
	Next, we set up some notation in preparation for defining $\scrS$. Let $\hat{A}$ denote the set $\{\hat{a} \mid a \in A\}$ of \emph{anchored letters}. A $k$-{\em anchored word} $\hat{w}$ is a word from the set $A^{\leq k}  \cdot \hat{A} \cdot A^{\leq k}$. Let $W_{k}$ denote the set of all $k$-anchored words. We denote anchored words by $\hat{w}, \hat{u}$ etc.
	
	Let $\hat{w} = u\hat{a}v$, where $u,v \in A^{*}$ and $\hat{a} \in \hat{A}$.
	We write $l(\hat{w})$ and $r(\hat{w})$ to denote how far is the anchored letter from the left-end and right-end respectively; Hence, $l(\hat{w}) = |u|$ and $r(\hat{w}) = |v|$. Clearly, $0 \leq l(\hat{w}), r(\hat{w}) \leq k$, for all $\hat{w} \in W_{k}$.
	
	The semigroup $T$ acts on the set $W_{k}$, on the left as well as on the right, in the following manner. For each letter $a\in A$ and $\hat{w} \in W_{k}$, the left action $a\otimes \hat{w} \in W_{k}$ of $a$ on  $\hat{w}$ is given by,
	\begin{equation}
		a\otimes \hat{w} = \begin{cases}
			\hat{w} & \text{if $l(\hat{w})=k$,}\\
			a\cdot \hat{w} & \text{otherwise.}
		\end{cases}
	\end{equation}
	
	Similarly the right action $\hat{w}\otimes a $ appends $a$ to the right if there are less than $k$ positions to the right of the anchored position, otherwise it leaves $\hat{w}$ intact. We extend these actions to all the elements of $T$ by letting, for $u \in T$, $a \in A$ and $\hat{w} \in W_{k}$,
	\begin{equation}
		\label{eq:s1}
		(u\odot a) \otimes \hat{w} = u \otimes (a \otimes \hat{w})~, \text{ and } \hat{w} \otimes (a\odot u)  = (\hat{w} \otimes a) \otimes u~.
	\end{equation}
	
	Clearly the left and right actions are compatible with each other, i.e., 
	\begin{equation}
		\label{eq:s2}
		(u \otimes \hat{w}) \otimes v = u \otimes (\hat{w} \otimes v)
	\end{equation}
	for all $\hat{w}\in W_{k}$ and $u,v \in T$. 
	Also the actions are involutory, with respect to the involution on $T$, since 
	\begin{equation}
		\label{eq:s3}
		(u \otimes \hat{w})^r  = \hat{w}^r \otimes u^r~.
	\end{equation}

	An anchored word is a {\em zero} for the action $\otimes$ if for all $a \in A$, $a\otimes \hat{w} = \hat{w} \otimes a = \hat{w}$. The zero elements of $W_{k}$ are precisely the anchored words whose anchored letter is at distance $k$ from each end.
	
	Next, we define the semigroup $S$. A \emph{multiset} in the universe $U$, where $U$ is a set, is a collection of elements from $U$ that allows for multiple instances. For instance, the multiset $\{3, 3, 4\}$, in the universe $\bbN$, contains two occurrences of $3$ and one occurrence of $4$. Given a multiset $I$ and an element $x\in I$, the \emph{multiplicity} of $x$ in $I$, denoted as $m_{I}(x)$, is the number of times $x$ occurs in $I$. The \emph{support} of $I$, denoted as $\supp(I)$, is the set of all elements occurring in $I$.  Given two multisets $I$ and $J$, the \emph{union} of $I$ and $J$, denoted as $I\cup J$, is the multiset $K$ with the support $\supp(I) \cup \supp(J)$ and multiplicity $m_{K}(x) = m_{I}(x) + m_{J}(x)$, for all $x \in \supp(K)$. Let $M(U)$ denote the set of all \emph{non-empty} multisets in the universe $U$. The set $M(U)$ is a semigroup with $\cup$ as the semigroup operation.
	
	Consider the semigroup $M(W_{k})$. Clearly, $M(W_{k})$ is infinite. 
	For each multiset $I$ in $M(W_{k})$, we let $I^{r}$ denote the multiset with support $\{ x^{r} \mid x \in I \}$ and multiplicity $m_{I^{r}}(x^{r})=m_{I}(x)$, for each $x\in I$. The reverse operation is an involution on $M(W_{k})$.

	We define the following equivalence relation $\equiv^{m}$ on the multisets in $M(W_{k})$; for $I, J$ in $M(W_{k})$, $I \equiv^{m} J$ if
	\begin{enumerate}
		\item  $m_{I}(x) =^{m}  m_{J}(x)$, for each non-zero element $x$ of $W_{k}$, and,
		\item $m_{I}(x)+m_{I}(x^{r}) =^{m}  m_{J}(x) + m_{J}(x^{r})$, for each zero element $x$ of $W_{k}$.
	\end{enumerate}
	
	The equivalence relation $\equiv^{m}$ is a congruence relation on the involution semigroup $M(W_{k})$: If $I \equiv^{m} J$, then  $I^{r}  \equiv^{m} J^{r}$, and also, if $I \equiv^{m} J$ and $K \equiv^{m} L$, then $I \cup K \equiv^{m} J \cup L$, for all $I,J,K,L \in M(W_{k})$. 
	
	We let $S$ to be the quotient $M(W_{k}) \slash \equiv^{m}$. Clearly, $S$ is a finite semigroup.
	We denote the semigroup operation on $S$ additively. 
	The $\star$-semigroup $\scrS = (S, r)$ is aperiodic and commutative.
	
	We can extend the involutory action $\otimes$ of $\scrT$ on $W_{k}$, to the $\star$-semigroup $\scrS$ pointwise, i.e., we let 
	the left action $u \otimes I$, for $u \in T$ and $I \in S$, to be the {multiset} of anchored words with the support $\{ u\otimes \hat{w} \mid \hat{w} \in I\}$ and multiplicity 
	\begin{equation}
		m_{u \otimes I}(\hat{v}) = \sum_{\hat{w} \in I, u \otimes \hat{w} = \hat{v}} m_{I}(\hat{w})~,
	\end{equation}
	for $\hat{v} \in W_{k}$.
	Right action $I\otimes u$ is defined dually. By Equations \ref{eq:s1}, \ref{eq:s2}, \ref{eq:s3}, the actions of $T$ on $S$ are involutory. 
	It is easily observed that, by definition of the action $\otimes$ and the congruence $\equiv^{m}$, for each idempotent $e \in T$ and each multiset $I \in S$,
	\begin{equation}
		e \otimes I \otimes e^{r}  = e \otimes I^{r} \otimes e^{r}~.
	\end{equation}
	Hence the action of $\scrT$ on $\scrS$ is locally hermitian. For convenience, below we identify the anchored word $\hat{w}$ and the singleton multiset $\{ \hat{w}\}$. Also, we will omit writing the action $\otimes$ explicitly.

	Let $\scrS \sdp \scrT$ be the locally hermitian semidirect product of $\scrS$ and $\scrT$ with the action $\otimes$. We define the $\star$-morphism $h$ from $\scrA^+$ to $\scrS \sdp \scrT$ as 
	\[
	h\colon a \mapsto (\hat{a}, a)\in S\times T~.
	\]
	It is easy to verify that $h$ is a morphism of involution semigroups. 
	
	Next, we show that the morphism $h$ recognises the language $L$; it suffices to show that if $w \approxr{m}{2k+1} w'$, then $h(w) = h(w')$.

	Assume $w=a_1\cdots a_n \in A^{+}$ and $w'=a'_1\cdots a'_\ell$ are two words such that
	$w \approxr{m}{2k+1} w'$. If $|w|<2k+1$ or $|w'|<2k+1$, then clearly $w=w'$ and $h(w)=h(w')$. Therefore, assume that both words have length at least $2k+1$.
	The image of $w$ under $h$ is of the form
	\begin{equation}
		\begin{split}
			h(w) &= (\hat{a_1}, a_1) \cdots (\hat{a_n}, a_n)\\
			&= (\hat{a_1} a_2 \cdots  a_n +  a_1 \hat{a_2}\cdots a_{n}  +\cdots+ a_1 \cdots a_{n-1} \hat{a_n},\\
			&\quad~~ a_1 \cdots a_n)~.
		\end{split}
	\end{equation}
	Observe that if $u,v \in T$ are such that $|u|, |v| \geq k$, then $u \otimes \hat{w} \otimes v = \suff_{k}(u) \otimes \hat{w} \otimes \pref_{k}(v)$, and $u \odot v =  \pref_{k}(u) \cdot \suff_{k}(v)$, for each $\hat{w} \in S$. Since $n\ge2k+1$, we derive, 
	\begin{align*}
		h(w) = (&\hat{a_1} a_2 \cdots  a_{k+1} +  a_1 \hat{a_2}\cdots a_{k+2}  +\cdots\\
		&+ a_{n-k} \cdots a_{n-1} \hat{a_n},~~ a_1 \cdots a_k a_{n-k+1} \cdots a_n)~.
	\end{align*}
	
	Similarly for $w'=a'_1\cdots a'_\ell$ the image  is of the form
	\begin{align*}
		h(w')  = (&\hat{a_1'} a_2' \cdots  a_{k+1}' +  a'_1 \hat{a'_2}\cdots a'_{k+2}  +\cdots\\
		&+ a'_{\ell-k} \cdots a'_{\ell-1} \hat{a'_\ell},~~ a'_1 \cdots a'_k a'_{\ell-k+1} \cdots a'_\ell)
	\end{align*} 
	
	Since $w$ and $w'$ have a common prefix and suffix of length $2k$, their images in $T$ are identical. Hence, all it remains to conclude the claim is to show that the summations 
	\begin{align*} 
		L &= \hat{a_1} a_2 \cdots  a_{k+1} +  a_1 \hat{a_2}\cdots a_{k+2}  +\cdots+ a_{n-k} \cdots a_{n-1} \hat{a_n}\\
		R &= \hat{a_1'} a_2' \cdots  a_{k+1}' +  a'_1 \hat{a'_2}\cdots a'_{k+2}  +\cdots+ a'_{\ell-k} \cdots a'_{\ell-1} \hat{a'_\ell}
	\end{align*}
	have the same value in $S$. Observe that, since $S$ is aperiodic and commutative, we can freely commute the terms in $L$ and $R$.
	
	The zero elements in the summation $L$ are precisely
	the terms of the form 
	\[a_{i-k}\cdots a_{i-1}\hat{a_i} a_{i+1}\cdots a_{i+k}~,
	\]
	
	where $k<i\leq n-k$.
	We observe that there is a one-to-one correspondence between the factors of $w$ of length $2k+1$ and zero elements in the summation $L$; a factor of $w$ of length $2k+1$ centred at a position $k<i\leq n-k$, corresponds to the term $a_{i-k}\cdots a_{i-1}\hat{a_i} a_{i+1}\cdots a_{i+k}$. Similarly, there is a one-to-one correspondence between the factors of $w'$ of length $2k+1$ and zero elements in the summation $R$. The non-zero elements in $L$ and $R$ are identical, owing to the fact that $w$ and $w'$ have the same  prefix and suffix of length $2k$. Finally, since sets of $2k+1$-factors of $w$ and $w'$ are equal up to reverse and up to threshold $m$, it mean that sets of zeros of $L$ and $R$ are also identical up to reverse and threshold $m$. Since in the semigroup $\scrS$, by definition of $\equiv^{m}$, zero elements are identified with their reverse and they are counted only up to threshold $m$, we deduce that $L=R$. Hence the claim is proved.

	Next we prove the converse direction.

	Let $\scrA=(A, \id)$ be an involutory alphabet.
	Assume $L \subseteq A^{+}$ is recognised by the locally hermitian product $\scrS \sdp \scrT$, for some aperiodic commutative $\star$-semigroup $\scrS = (S, \star) $ and locally trivial $\star$-semigroup $\scrT = (T, \diamond)$, with the morphism $h\colon\scrA^{+}\rightarrow \scrS \sdp \scrT$ and the accepting set $P$. 
	
		Note that since $\scrA$ is hermitian, for each $(s, t) \in S \times T$ such that $(s,t)=h(a)$ for some $a\in A$,
	\begin{equation}
	\label{Eq:himage}
	(s,t)= (s^{\star}, t^{\diamond})~.
	\end{equation}

	We use the following property of locally trivial semigroups: If $T$ is locally trivial, then there is a $k\geq 1$ such that
	\begin{equation} 
		\label{eq:locally-trivial}
		x_1\cdots x_k\cdot z \cdot  y_1\cdots y_k = x_1 \cdots x_k \cdot y_1\cdots y_k \text{ for all $x_i, z, y_i \in T$.}
	\end{equation}
	Let $k$ be such a number for the locally trivial semigroup $T$.
	
	Since the semigroup $S$ is aperiodic there exists a natural number $t>0$, such that $s^t=s^{t+1}$ for all elements $s\in S$.

	Let $w, w' \in \scrA^+$ be two words such that $w \approxr{t}{4k+1} w'$. We claim that $w \in L$ if and only if $w' \in L$; proving the claim implies the lemma.
	We assume that $w$ and $w'$ are of length at least $4k+1$, otherwise $w=w'$, and the claim is immediate.
	
	Let $w = a_1 \cdots a_n \in \scrA^+$ be a word. Let $h(a_i) = (s_i, t_i)$. 
	The image of $w$ under $h$ is
	\begin{equation}
		\begin{split}
			h(w) &= (s_1, t_1) \cdots (s_n, t_n)\\
			&= (s_1 t_2 \cdots  t_n  + t_1 s_2 t_{3} \cdots  t_n +\cdots
			+ t_1 \cdots t_{n-1} s_n,\\
			&\quad~~ t_1 \cdots t_n)~.
		\end{split}
	\end{equation}

	Similarly, let $w' = a_1' \cdots a_m' \in \scrA^+$. Assuming that $h(a_{i}')=(s_{i}',t_{i}')$, the image of $w'$ under $h$	 is
	\begin{equation}
		\begin{split}
			h(w') &=  (s_1' t_2' \cdots  t_m'  + t_1' s_2' t_{3}' \cdots  t_m' +\cdots+ t_1' \cdots t_{m-1}' s_m',\\
			&\quad~~ t_1' \cdots t_m')~.
		\end{split}
	\end{equation}
	
	Since $m,n \ge 4k+1$ and $w \approxr{t}{4k+1} w'$, by definition, the words $w$ and $w'$ have the same prefix and suffix of length $4k$. Using Equation \ref{eq:locally-trivial}, we conclude that $t_1 \cdots t_n = t_1' \cdots t_m'$. Therefore, it only remains to show that the summations 
	\begin{align*} 
		%\label{eqn:sumnormalform}
		L &= s_1 t_2 \cdots  t_n  + t_1 s_2 t_{3} \cdots  t_n +\cdots+ t_1 \cdots t_{n-1} s_n\\
		R &= s_1' t_2' \cdots  t_m'  + t_1' s_2' t_{3}' \cdots  t_m' +\cdots+ t_1' \cdots t_{m-1}' s_m'
	\end{align*}
	have the same value in $S$. Observe that, since $S$ is aperiodic and commutative, we can freely commute the terms in $L$ and $R$.
	Let $p_{i}=t_{1}\cdots t_{i-1}$ and $q_{i}=t_{i+1}\cdots t_{n}$, for $1\leq i \leq n$. Then, 
	\[
	L = \sum_{i=1}^{n} p_{i}s_{i}q_{i} 
	\]
	Similarly, letting $p'_{i}=t'_{1}\cdots t'_{i-1}$ and $q'_{i}=t'_{i+1}\cdots t'_{m}$, for $1\leq i \leq m$. Then, 
	\[
	R = \sum_{i=1}^{m} p'_{i}s'_{i}q'_{i} 
	\]
	We are going to consider $p_{i},q_{i},p_{i}',q_{i}'$ etc.~as words over the alphabet $T$ and manipulate them. Let $f\colon T^{+} \rightarrow T^{+}$ be the function 
	defined as, for each $w \in T^{+}$,
	\[
	f(w) = \begin{cases}
		w & \text{if $|w|<2k$},\\
		\pref_{k}(w)\cdot \suff_{k}(w) & \text{otherwise}.
	\end{cases}
	\]
	
	Using Equation \ref{eq:locally-trivial}, the expressions $L$ and $R$ can be rewritten as
	\begin{align}
		\label{eqn:auxnormalform}
		L &= \sum_{i=1}^{n} f(p_{i})\,s_{i}\,f(q_{i})~,\\
		R &= \sum_{i=1}^{m} f(p'_{i})\,s'_{i}\,f(q'_{i})~.
	\end{align}
	
	Let $L_{i}$ denote the term $f(p_{i})\,s_{i}\,f(q_{i})$. Similarly, let 
	$R_{i}$ denote the term $f(p'_{i})\,s'_{i}\,f(q'_{i})$.
	
	Since the words $w$ and $w'$ have the same prefix of length $4k$, the partial sums
	\[
	\sum_{i=1}^{2k-1} L_{i} \text{ and } \sum_{i=1}^{2k-1} R_{i}
	\]
	are identical. Similarly, since the words have the same suffix of length $4k$, the partial sums
	\[
	\sum_{i=n-2k+1}^{n}L_{i} ~\text{ and }~ \sum_{i=m-2k+1}^{m}R_{i}
	\]
	are also identical. Thus, all it remains is to show is the equality of the following partial sums:
	\[
	L_{f}=\sum_{i=2k}^{n-2k} L_{i} \text{ and }R_{f}=\sum_{i=2k}^{m-2k} R_{i}~.
	\]
	The terms occurring in $L_{f}$ and $R_{f}$ are called \emph{special}.
	
	Let $x=t_{1}\cdots t_{k} = t'_{1}\cdots t'_{k}$ and $y=t_{n-k+1}\cdots t_{n}= t'_{m-k+1}\cdots t'_{m}$. Observe that, for each special term $L_{i}$, 
	\[
	L_{i}= x \cdot (t_{i-k}\cdots t_{i-1}) \, s_{i} (t_{i+1}\cdots t_{i+1+k}) \cdot y~.
	\]
	
	Similarly, for each special term $R_{i}$,
	\[
	R_{i}= x\cdot (t'_{i-k}\cdots t'_{i-1}) \, s'_{i} (t'_{i+1}\cdots t'_{i+1+k}) \cdot y~.\]

	Before proceeding, we make an observation about special terms.
	Let $e$ and $e^{\diamond}$ be two idempotents in $T$. Let $p,q,u,v \in T^{+}$ be such that $|p|,|q|,|u|,|v| \geq k$, and let $s \in S$, then,
	\begin{align}
		pq s uv &= p e (q s u) e^\diamond v  && \text{(using  Equation \ref{eq:locally-trivial})}\\
		&= p e (q s u)^\star e^\diamond v  && \text{($\because$ the actions are locally hermitian)}\\
		&= p e u^\diamond s^\star q^\diamond e^\diamond v  && \text{($\because$ the actions are involutory)}\\
		&= p u^\diamond s^\star q^\diamond v~.  && \text{(using  Equation \ref{eq:locally-trivial}) \label{eq:hermitian-rotate}}
	\end{align}
	
	Applying the above observation and Equation \ref{Eq:himage} to the term $L_{i}$ implies that 
	\begin{equation}
		\label{eq:herm-rotate1}
		\begin{split}
			x \cdot &(t_{i-k}\cdots t_{i-1}) \, s_{i} (t_{i+1}\cdots t_{i+1+k}) \cdot y \\
			&= x \cdot (t^{\diamond}_{i+1+k}\cdots t^{\diamond}_{i+1}) \, s_{i}^{\star} (t^{\diamond}_{i-1}\cdots t^{\diamond}_{i-k}) \cdot y\\
			&= x \cdot (t_{i+1+k}\cdots t_{i+1}) \, s_{i} (t_{i-1}\cdots t_{i-k}) \cdot y~.
		\end{split}
	\end{equation}
	Similarly for $R_{i}$,
	\begin{equation}
		\label{eq:herm-rotate2}
		\begin{split}
			x\cdot &(t'_{i-k}\cdots t'_{i-1}) \, s'_{i} (t'_{i+1}\cdots t'_{i+1+k}) \cdot y \\
			&= x \cdot (t'^{\diamond}_{i+1+k}\cdots t'^{\diamond}_{i+1}) \, {s'_{i}}^{\star} (t'^{\diamond}_{i-1}\cdots t'^{\diamond}_{i-k}) \cdot y\\
			&= x \cdot (t'_{i+1+k}\cdots t'_{i+1}) \, {s'_{i}} (t'_{i-1}\cdots t'_{i-k}) \cdot y\\
		\end{split}
	\end{equation}
	
	We are ready to conclude the claim $L_{f} = R_{f}$.  Factors of $w$, of length $2k+1$, and centred at positions $i$, for $2k \leq i <n-2k$, are called \emph{special}. Similarly, factors of $w'$, of length $2k+1$, and centred at  positions $i$, for $2k \leq i <m-2k$, are also called \emph{special}. 
	We observe that the value of each special term $L_{i}$ is completely determined by the special factor at $i$. A similar claim holds for the special term $R_{i}$ and the special factor of $w'$ at $i$. Since the semigroup $S$ is aperiodic, it is enough to compare the special factors in each word up to the threshold $t$. Moreover, by Equations \ref{eq:herm-rotate1} and \ref{eq:herm-rotate2}, it is enough to consider the special factors up to reverse. Finally, since $w \approxr{t}{4k+1} w'$, the claim follows.
\end{proof}

\section{Varieties of Involutory Languages}

\label{Sec:3} 

Next, we introduce the notion of varieties of involutory languages, and relate it to pseudovarieties of finite involution semigroups by proving an Eilenberg-type correspondence. For this purpose, we define the notion of a syntactic involution semigroup, and show that it possesses properties similar to that of syntactic semigroups. Finally, Theorem \ref{lemma:lrtt-acomli} is stated in terms of pseudovarieties of involution semigroups. 

It is possible to state the results in this section in a much more general setting using   categories, following \cite{mikolajmonad,mikolajmonadarxiv}, where 
the notions of recognisability, languages and language varieties etc.~are developed for 
monads and a variety theorem for them is proved. Variety theorems of many settings, for instance finite/infinite words, finite trees etc.~can be seen as instances of this unified result.
 In this framework, the involution operation is a monad that satisfies a certain distributive law. One advantage of this approach is that the variety theorem for involutary languages can be deduced from the general case. But it also requires the significantly more complex language of categories.  For the sake of simplicity we do the variety theorem in the usual way.

\subsection{Syntactic Involution Semigroups}

 The {\em syntactic congruence} of a language $L\subseteq A^+$, denoted as $\sim_L$, is the congruence relation on $A^+$, defined as 
\begin{equation}
x \sim_L y \quad := \quad  uxv \in L \text{ if and only if } uyv \in L, 
\end{equation} for all words $u,v \in A^*$.

The quotient semigroup $A^+\slash \sim_L$, denoted as $S(L)$, is called the {\em  syntactic semigroup} of $L$. 
Let $[u]_{{L}}$ denote the equivalence class of $u \in A^{+}$ under the congruence $\sim_{L}$. Then, the surjective morphism $\varphi_{L}\colon u \mapsto [u]_{{L}}$, for $u \in A^{+}$, from $A^+$ to $S(L)$, recognises $L$ with the accepting set $\varphi_{L}(L)$.  The morphism $\varphi_L$ is called the {\em syntactic morphism} of $L$.

It is possible to lift the notion of syntactic semigroup to the case of recognition by involution semigroups.

\begin{definition}
\label{defn:syn-star}
Let $\scrA=(A, \da)$ be an involutory alphabet. The {\em syntactic $\star$-congruence} of $L \subseteq A^{+}$ is the relation $\ssim_{L}$ on $A^+$, given by
\[
x \ssim_{L} y \quad := \quad x \sim_L y \text{ and } x \sim_{L^\da} y~. \\                                
\]
\end{definition}

Since $\ssim_{L}$ is an intersection of two congruence relations on $A^{+}$, it is a congruence on $A^{+}$. Hence, $A^+\slash \ssim_{L}$ is a semigroup. 

Let $[\![w]\!]_{{L}}$ denote the equivalence class of $w\in A^{+}$ under the congruence $\ssim_{L}$. Firstly, we observe

\begin{lemma}
\label{lemma:L-L-da}
Let $\scrA=(A, \da)$ be an involutory alphabet. Let $L \subseteq A^{+}$. 
Then, $x \sim_{L^{\da}} y$ if and only if $x^{\da} \sim_{L} y^{\da}$, for all $x,y \in A^{+}$.
\end{lemma}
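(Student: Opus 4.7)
The plan is to prove both directions simultaneously by unfolding the definition of $\sim_{L^\da}$ and rewriting membership in $L^\da$ using the fact that $\da$ is an involutory anti-homomorphism on $A^*$, then reindexing the universally quantified context words.

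First I would expand $x \sim_{L^\da} y$: by definition, this says that for all $u,v \in A^*$, $uxv \in L^\da \iff uyv \in L^\da$. Since $L^\da = \{w^\da \mid w \in L\}$ and $\da$ is its own inverse on $A^*$, we have $w \in L^\da \iff w^\da \in L$. Applying this to $uxv$, together with the anti-homomorphism property $(uxv)^\da = v^\da x^\da u^\da$, yields $uxv \in L^\da \iff v^\da x^\da u^\da \in L$, and analogously for $y$. Therefore $x \sim_{L^\da} y$ becomes: for all $u,v \in A^*$, $v^\da x^\da u^\da \in L \iff v^\da y^\da u^\da \in L$.

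Next I would perform the change of variables $u' := v^\da$ and $v' := u^\da$. Because $\da$ restricts to a bijection on $A^*$ (it sends $A^+$ to $A^+$ and fixes $\epsilon$, and is its own inverse), the pair $(u,v)$ ranges over $A^* \times A^*$ precisely when $(u',v')$ does. The statement thus transforms, with no loss or gain of universal content, into: for all $u',v' \in A^*$, $u' x^\da v' \in L \iff u' y^\da v' \in L$. This is the definition of $x^\da \sim_L y^\da$, completing the equivalence.

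There is no real obstacle here; the whole argument is a bookkeeping exercise on how $\da$ interacts with concatenation and language membership. The only point that needs a sentence of care is the reindexing step, where one must explicitly invoke that $\da$ is a bijection on $A^*$ (inherited from the bijection $\da$ on $A$) so that quantifying over $(u,v)$ is the same as quantifying over $(u^\da, v^\da)$. Everything else follows from the two defining equations $(a^\da)^\da = a$ and $(ab)^\da = b^\da a^\da$.
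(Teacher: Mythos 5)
Your proof is correct and follows essentially the same route as the paper's: unfold the definition of $\sim_{L^\da}$, use that $\da$ is an involutive anti-homomorphism to translate $uxv \in L^\da$ into $v^\da x^\da u^\da \in L$, and reindex the context words. The only cosmetic difference is that you run both directions at once as a chain of equivalences and make the bijectivity of the reindexing explicit, whereas the paper proves one direction and notes the other is symmetric.
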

\begin{proof}
	Assume $x \sim_{L^{\da}} y$, for $x,y \in A^{+}$. Let $u,v \in A^{*}$.
	Then, $uxv \in L^{\da}$ if and only if $uyv \in L^{\da}$, by definition of $\sim_{L^{\da}}$. It follows that 
	$v^{\da}x^{\da}u^{\da} \in L$ if and only if $v^{\da}y^{\da}u^{\da} \in L$. Since $u^{\da},v^{\da}$ are arbitrary, it follows that 
	$x^{\da} \sim_{L} y^{\da}$. The other directon is similar.
\end{proof}

\begin{lemma}
\label{lemma:syn-class-inv}
The map $[\![w]\!]_{{L}} \mapsto [\![w^{\da}]\!]_{{L}}$, for $w\in A^{+}$, is an involution on the equivalence classes of $\ssim_{L}$.
\end{lemma}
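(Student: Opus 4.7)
The plan is to verify two things: first, that the map $\varphi \colon [\![w]\!]_L \mapsto [\![w^\da]\!]_L$ is well-defined on $\ssim_L$-classes, and second, that it satisfies the two involution axioms $\varphi\circ\varphi = \id$ and $\varphi(xy) = \varphi(y)\varphi(x)$.

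For well-definedness, I would take $w, w' \in A^+$ with $w \ssim_L w'$ and show $w^\da \ssim_L w'^\da$. Unpacking the definition, $w \ssim_L w'$ gives both $w \sim_L w'$ and $w \sim_{L^\da} w'$. By Lemma \ref{lemma:L-L-da} applied to $L$, the second of these is equivalent to $w^\da \sim_L w'^\da$. By Lemma \ref{lemma:L-L-da} applied to $L^\da$ (using $L^{\da\da}=L$), the first is equivalent to $w^\da \sim_{L^\da} w'^\da$. Together these give $w^\da \ssim_L w'^\da$, so $\varphi$ is well-defined.

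The involution axiom $\varphi(\varphi([\![w]\!]_L)) = [\![w]\!]_L$ is immediate from $w^{\da\da}=w$. For the anti-homomorphism property, given $u,v \in A^+$,
\[
\varphi\bigl([\![u]\!]_L\cdot [\![v]\!]_L\bigr) = \varphi\bigl([\![uv]\!]_L\bigr) = [\![(uv)^\da]\!]_L = [\![v^\da u^\da]\!]_L = [\![v^\da]\!]_L\cdot [\![u^\da]\!]_L = \varphi([\![v]\!]_L)\cdot\varphi([\![u]\!]_L),
\]
where I have used that $\da$ is an involution on $A^+$ and that the quotient map $A^+ \to A^+/\ssim_L$ is a semigroup morphism.

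There is no real obstacle here; the only substantive step is well-definedness, and that reduces cleanly to Lemma \ref{lemma:L-L-da} together with the observation that the defining condition of $\ssim_L$ is symmetric in $L$ and $L^\da$, so applying $\da$ on both sides just swaps the two conjuncts.
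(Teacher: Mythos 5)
Your proposal is correct and follows essentially the same route as the paper: well-definedness is reduced to Lemma \ref{lemma:L-L-da} applied to both conjuncts of the definition of $\ssim_L$ (the paper phrases this as a single chain of equivalences), and the involution axioms follow from $w^{\da\da}=w$ and $(uv)^\da = v^\da u^\da$ in $A^+$. No substantive differences.
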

\begin{proof}
	It suffices to show that $x \ssim_{L} y$ if and only if $x^{\da} \ssim_{L} y^{\da}$, and 
	$(xy)^{\da} \ssim_{L} y^{\da}x^{\da}$, for all words $x,y \in A^{+}$.
	
	Firstly,
	\begin{align*}
		x \ssim_{L} y &\iff x \sim_L y \text{ and } x \sim_{L^\da} y  && \text{(by Definition \ref{defn:syn-star})} \\
		&\iff x^{\da} \sim_{L^{\da}} y^{\da} \text{ and } x^{\da} \sim_{L} y^{\da} && \text{(by Lemma \ref{lemma:L-L-da})}\\
		&\iff x^{\da} \ssim_{L} y^{\da}~. && \text{(by Definition \ref{defn:syn-star})}
	\end{align*}
	Lastly, since $(xy)^{\da} = y^{\da}x^{\da}$, $(xy)^{\da} \ssim_{L} y^{\da}x^{\da}$.
	
\end{proof}

\begin{definition}
The syntactic $\star$-semigroup of $L$ is the quotient semigroup $A^+\slash \ssim_{L}$ along with the involution $\da\colon [\![w]\!]_{{L}} \mapsto [\![w^{\da}]\!]_{{L}}$, for $w \in A^{+}$.
\end{definition}

 We can obtain the syntactic $\star$-semigroup of $L$ from the syntactic semigroup of $L$  as well. Let $\varphi_{L}^{\da}$ denote the morphism from $A^{+}$ to $S(L)^{\op}$ given by Lemma \ref{lemma:h-star}. Let $\chi_{L}$ be the product morphism from $A^{+}$ to $S(L) \times S(L)^{\op}$ given by, for $w \in A^{+}$,
\begin{equation}
\chi_{L}\colon w \mapsto \(\varphi_{L}(w), \varphi_{L}^{\da}(w)\)~.
\end{equation}
By definition of the morphism $\varphi_{L}^{\da}$,
\begin{equation}
\chi_{L}(w) = \([w]_{{L}}, [w^{\da}]_{{L}}\)~.
\end{equation}
It is easy to see that $\chi_{L}$ is a morphism from $\scrA^{+}$ to $\(S(L) \times S(L)^{\op}, \flip\)$.
Let $\S(L)$ denote the sub-$\star$-semigroup that is the image of the morphism $\chi_{L}$.

\begin{proposition}
\label{prop:syn-star-alt}
$\S(L)$ is isomorphic to the syntactic $\star$-semigroup of $L$.
\end{proposition}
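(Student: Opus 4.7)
The plan is to exhibit an explicit isomorphism between the quotient $A^+/\ssim_L$ (equipped with the involution from Lemma \ref{lemma:syn-class-inv}) and $\S(L)$, the image of $\chi_L$ in $S(L) \times S(L)^{\op}$. The natural candidate is the map
\[
\Phi \colon [\![w]\!]_L \mapsto \chi_L(w) = \([w]_L, [w^\da]_L\)~.
\]
I would first verify that $\Phi$ is well defined and injective in a single step by showing that for all $x,y \in A^+$,
\[
\chi_L(x) = \chi_L(y) \iff x \ssim_L y~.
\]
By definition of $\chi_L$, the left-hand side is equivalent to $[x]_L = [y]_L$ and $[x^\da]_L = [y^\da]_L$, i.e., $x \sim_L y$ and $x^\da \sim_L y^\da$. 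Applying Lemma \ref{lemma:L-L-da} to the second conjunct converts it to $x \sim_{L^\da} y$, and by Definition \ref{defn:syn-star} this is exactly $x \ssim_L y$.

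Next I would argue that $\Phi$ is a semigroup homomorphism and surjects onto $\S(L)$. Surjectivity is immediate because $\S(L)$ is by definition the image of $\chi_L$, and every element of the image is of the form $\chi_L(w) = \Phi([\![w]\!]_L)$. The homomorphism property follows because $\chi_L$ is a morphism from $A^+$ and the operations on $A^+/\ssim_L$ are defined precisely to make the quotient map a morphism, so $\Phi([\![xy]\!]_L) = \chi_L(xy) = \chi_L(x)\chi_L(y) = \Phi([\![x]\!]_L)\Phi([\![y]\!]_L)$.

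Finally I would check that $\Phi$ respects the involutions on both sides. On $\S(L) \subseteq S(L) \times S(L)^{\op}$ the involution is $\flip$, so
\[
\Phi([\![w]\!]_L)^{\flip} = \([w]_L, [w^\da]_L\)^{\flip} = \([w^\da]_L, [w]_L\) = \chi_L(w^\da) = \Phi([\![w^\da]\!]_L)~,
\]
which matches the involution on $A^+/\ssim_L$ given in Lemma \ref{lemma:syn-class-inv}. Combined with the previous bullet, this gives an isomorphism of involution semigroups. There is no serious obstacle here; the only subtle point is the translation $x \sim_{L^\da} y \iff x^\da \sim_L y^\da$, which is exactly Lemma \ref{lemma:L-L-da}, and the consequent identification of $\ker(\chi_L)$ with $\ssim_L$.
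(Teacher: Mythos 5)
Your proof is correct and follows essentially the same route as the paper: both exhibit the map $[\![w]\!]_L \mapsto ([w]_L,[w^\da]_L)$, reduce well-definedness and injectivity to Lemma \ref{lemma:L-L-da}, and then verify the morphism and involution properties directly. The only cosmetic difference is that you package well-definedness and injectivity into the single equivalence $\chi_L(x)=\chi_L(y) \iff x \ssim_L y$, which is a slightly cleaner presentation of the same argument.
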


\begin{proof}
	We claim that the map $f\colon [\![w]\!]_{{L}} \mapsto ([w]_{{L}}, [w^{\da}]_{{L}})$, for $w \in A^{+}$, is an isomorphism between $A^+\slash \ssim_{L}$ and $\S(L)$.
	If $w \ssim_{L} w'$, then $w \sim_{L} w'$ and $w \sim_{L^\da} w'$ by definition; by Lemma  \ref{lemma:L-L-da}, $w \sim_{L^\da} w'$ if and only if $w^{\da} \sim_{L} w'^\da$.
	Therefore, 
	the map  $f$ is well-defined. Clearly, it is surjective. By the definition of $\ssim_{L}$, it is injective as well. Since 
	\begin{align*}
		f([\![uv]\!]_{{L}}) &= ([uv]_{{L}},~ [(uv)^{\da}]_{{L}}) \\
		&= ([uv]_{{L}},~ [v^{\da}u^{\da}]_{{L}})\\
		&= ([u]_{{L}},~ [u^{\da}]_{{L}})([v]_{{L}}, [v^{\da}]_{{L}})\\
		&= f([\![u]\!]_{{L}})f([\![v]\!]_{{L}})~,
	\end{align*}
	$f$ is a morphism. Finally, since
	\begin{align*}
		f([\![w^{\da}]\!]_{{L}}) &= ([w^{\da}]_{{L}}, [(w^{\da})^{\da}]_{{L}})\\
		&= ([w^{\da}]_{{L}}, [w]_{{L}}) \\
		&= f\([\![w]\!]_{{L}}\)^{\flip}~,
	\end{align*}
	$f$ is a morphism of involution semigroups.
	Therefore $f$ is an isomorphism from $A^+\slash \ssim_{L}$ to $\scrS(L)$.
\end{proof}

As one would expect, the syntactic $\star$-semigroups $\S(L)$ and $\S(L^{\da})$ are closely related. They are anti-isomorphic to each other, i.e, $\S(L)$ is isomorphic to the opposite of $\S(L^{\da})$. For this, first we show
 
\begin{lemma}
\label{lemma:syn-L-L-da}
The map $\alpha\colon [w]_{{L}} \mapsto [w^{\da}]_{L^{\da}}$, for $w\in A^{+}$, is an isomorphism from $S(L)^{\op}$ to $S(L^{\da})$.
\end{lemma}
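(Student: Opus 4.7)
The plan is to verify the four properties required for $\alpha$ to be an isomorphism of semigroups, relying crucially on Lemma \ref{lemma:L-L-da} which says $x \sim_{L^\da} y$ if and only if $x^\da \sim_L y^\da$. Since $\alpha$ is defined on equivalence classes, the first task is to show that the definition does not depend on the choice of representative, and simultaneously that distinct classes map to distinct classes. Both properties follow by applying Lemma \ref{lemma:L-L-da} with the substitution $x = u^\da, y = v^\da$, which yields that $u \sim_L v$ if and only if $u^\da \sim_{L^\da} v^\da$. This single biconditional gives both well-definedness (forward direction) and injectivity (reverse direction) at once.

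For surjectivity, I would simply observe that any element of $S(L^\da)$ is of the form $[z]_{L^\da}$ for some $z \in A^+$, and this is the image of $[z^\da]_L$ under $\alpha$, since $(z^\da)^\da = z$. For the homomorphism property, the key point is that the source is $S(L)^{\op}$, so multiplication there is reversed. Concretely, I would compute
\begin{align*}
\alpha\bigl([u]_L \cdot^{\op} [v]_L\bigr) &= \alpha([vu]_L) = [(vu)^\da]_{L^\da} = [u^\da v^\da]_{L^\da}\\
&= [u^\da]_{L^\da}\cdot[v^\da]_{L^\da} = \alpha([u]_L)\cdot\alpha([v]_L),
\end{align*}
where the middle equality uses the definition of $\da$ on words.

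I do not expect a genuine obstacle here: each property reduces to Lemma \ref{lemma:L-L-da} or to an elementary calculation involving $(xy)^\da = y^\da x^\da$ together with the convention that the opposite semigroup reverses multiplication. The only mildly delicate point is to keep careful track of which side the $\op$ lives on, so as not to conflate $S(L)^\op \to S(L^\da)$ with a morphism $S(L) \to S(L^\da)^\op$; I will state the direction explicitly and carry out the morphism check in the order shown above so that the reversal of multiplication on the left-hand side exactly cancels the reversal introduced by $\da$ on the right-hand side.
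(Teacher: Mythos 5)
Your proposal is correct and follows essentially the same route as the paper: well-definedness and injectivity both come from Lemma \ref{lemma:L-L-da} (the paper invokes it for exactly these two points), surjectivity is the same one-line observation, and your homomorphism computation $\alpha([u]_L \cdot^{\op} [v]_L) = \alpha([vu]_L) = [(vu)^\da]_{L^\da} = [u^\da v^\da]_{L^\da} = \alpha([u]_L)\alpha([v]_L)$ is verbatim the paper's chain of equalities. Your extra care in spelling out the substitution $x=u^\da$, $y=v^\da$ and in tracking which side carries the $\op$ is a welcome bit of added explicitness, but not a different argument.
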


\begin{proof}
	By Lemma \ref{lemma:L-L-da}, the map is well-defined. Clearly, $\alpha$ is surjective. By the same lemma, it is injective.
	We verify that
	\begin{align*}
		\alpha([u]_{{L}}[v]_{{L}}) &= \alpha([vu]_{{L}}) && (\text{Product in $S(L)^{\op}$})\\
		&= [(vu)^{\da}]_{{L^{\da}}}\\
		&= [u^{\da}v^{\da}]_{{L^{\da}}}\\
		&= [u^{\da}]_{{L^{\da}}}[v^{\da}]_{{L^{\da}}}
	\end{align*}
	Hence, $\alpha$ is an isomorphism. 
\end{proof}

Therefore, the semigroup $S(L) \times S(L)^{\op}$ is isomorphic to $S(L) \times S(L^{\da})$. Likewise $S(L^{\da}) \times S(L^{\da})^{\op}$ is 
isomorphic to $S(L)^{\op} \times S(L)$. Finally, the operation $\flip$ is an isomorphism from $S(L) \times S(L)^{\op}$ to $S(L)^{\op} \times S(L)$. Therefore, $\S(L)$ and $\S(L^{\da})$ are anti-isomorphic to each other.

We look at some special cases. When $L^\da = L$, the semigroup reduct of $\S(L)$  is isomorphic to $S(L)$. Therefore, in the case of a reversible language, $\scrS(L)$ degenerates into the syntactic semigroup.

When $\scrA$ is hermitian, for every letter $a\in A$, the element $\chi_{L}(a) \in \S(L)$ is hermitian. Since such elements generate $\S(L)$, it is hermitian.

\begin{proposition}
\label{prop:syn-prop}
$\S(L)$ divides any $\star$-semigroup recognising $L$.
\end{proposition}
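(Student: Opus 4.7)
The plan is to mimic the classical argument that the syntactic semigroup divides any semigroup recognizing $L$, while checking that everything respects the involutions. Concretely, suppose $\scrT = (T, \diamond)$ is an involution semigroup recognising $L$ via an involutory morphism $h\colon \scrA^{+} \to \scrT$ with accepting set $P$. The division will be witnessed by the sub-$\star$-semigroup $T' = h(\scrA^{+})$ of $\scrT$ together with a surjective involutory morphism $\psi\colon T' \twoheadrightarrow \S(L)$.

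First I would set $\psi(h(w)) := \chi_{L}(w)$, and the core task is to check this is well-defined, i.e.\ that $h(u) = h(v)$ implies $\chi_{L}(u) = \chi_{L}(v)$. Applying $\diamond$ and using that $h$ is involutory gives simultaneously $h(u^{\da}) = h(v^{\da})$. For the first component of $\chi_{L}$, the standard argument works: for any $x,y \in A^{*}$ we have $h(xuy) = h(xvy)$ (reading this by cases on whether $x$ or $y$ is empty and using $h(u) = h(v)$ inside $\scrT$), so $xuy \in L \iff xvy \in L$, giving $u \sim_{L} v$. The analogous argument applied to $u^{\da}, v^{\da}$ yields $u^{\da} \sim_{L} v^{\da}$, which by Lemma \ref{lemma:L-L-da} is the same as $u \sim_{L^{\da}} v$. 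Combining, $u \ssim_{L} v$, hence $[\![u]\!]_{L} = [\![v]\!]_{L}$; by Proposition \ref{prop:syn-star-alt} this is precisely $\chi_{L}(u) = \chi_{L}(v)$, so $\psi$ is well-defined.

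The remaining checks are routine. Surjectivity is immediate since every generator $\chi_{L}(a)$ is hit by $h(a) \in T'$. That $\psi$ is a semigroup morphism follows from the fact that both $h$ and $\chi_{L}$ are morphisms on $\scrA^{+}$. For compatibility with the involution, one computes
\[
\psi(h(w)^{\diamond}) = \psi(h(w^{\da})) = \chi_{L}(w^{\da}) = \chi_{L}(w)^{\flip} = \psi(h(w))^{\flip},
\]
where the middle equality uses $\chi_{L}(w^{\da}) = ([w^{\da}]_{L}, [w]_{L})$. Finally $T'$ itself is a sub-$\star$-semigroup of $\scrT$ because $h$ is involutory, so $h(\scrA^{+})^{\diamond} = h((\scrA^{+})^{\da}) = h(\scrA^{+})$.

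I do not foresee a hard obstacle: the only non-trivial step is the well-definedness of $\psi$, and even that reduces by Lemma \ref{lemma:L-L-da} to running the classical argument twice, once for $L$ and once for $L^{\da}$. The involutory structure of $h$ is exactly what is needed to line up these two instances.
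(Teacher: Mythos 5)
Your proof is correct and follows essentially the same route as the paper: both restrict to the image sub-$\star$-semigroup and produce a surjective involutory morphism onto $\S(L)$ by handling the two components $\sim_{L}$ and $\sim_{L^{\da}}$ separately (via Lemma \ref{lemma:L-L-da}) and taking the product map into $S(L)\times S(L)^{\op}$. The only cosmetic difference is that the paper invokes the classical universality of the syntactic semigroup as a black box to obtain the two factorisations $\mu$ and $\mu^{\star}$, whereas you unfold that argument into an explicit well-definedness check.
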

\begin{proof}
	\begin{figure}[h]
		\centering\SelectTips{cm}{}
		\leavevmode\xymatrix@C=60pt@R=30pt{
			\scrA^{+}  \ar[r]^{h}  \ar[dr]^{\chi_{L}} & \scrT \ar[d]^{(\mu, \mu^{\star})} \\
			& \S(L) \\
		}
		\caption{Universality of syntactic $\star$- semigroups}
		\label{fig:SynStar}
	\end{figure}
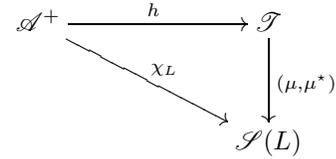
	
	The proof is analogous to the semigroup case \cite{pin1986varieties}, Proposition 2.7. Assume the language $L$ is recognised by a $\star$-semigroup $\scrT = {(T, \star)}$ with a $\star$-morphism $h \colon \scrA^{+}\rightarrow \scrT$.
	Without loss of generality we can assume that the map $h$ is surjective; otherwise we take $\scrT$ to be the image of $A^+$ under $h$, that is a sub-$\star$-semigroup of $\scrT$. 
	
	Since $T$ recognises the language $L$, there is a surjective morphism $\mu$ from $T$ to $S(L)$ such that $\varphi_{L} = \mu \circ h$.  Similarly, by Lemma \ref{lemma:h-star}, there is a surjective morphism $\mu^\star$ from $T$ to $S(L)^{\op}$ such that $\varphi_{L}^\da = \mu^\star \circ h$. 
	
	Let $(\mu, \mu^\star)$  be the product map from $T$ to $\scrS(L)$, defined as, for $t \in T$,
	\begin{align}
		(\mu, \mu^*) &\colon t \mapsto (\mu(t), \mu^{\star}(t))~.
	\end{align}
	Since $h$ is surjective this implies that there is a surjective $\star$-morphism namely 
	$(\mu, \mu^*)$ from $\scrT$ to $\scrS(L)$. 
\end{proof}

\subsection{Involution Varieties and Pseudovarieties} 

A \emph{pseudovariety of finite semigroups} is a  set of finite semigroups that is closed under finite direct products, subsemigroups and quotients. 

If $\scrS=(S, \star)$ and $\scrT=(T,\dagger)$ are two $\star$-semigroups then the {\em direct product} of $\scrS$ and $\scrT$, denoted $\scrS \times \scrT$, is the semigroup $S \times T$ with the involution $(x,y) \rightarrow (x^\star, y^\dagger)$. 

\begin{definition}
 A class of finite involution semigroups $\bfV$ is an {\em involution pseudovariety} if it is closed under taking finite direct products,  sub-$\star$-semigroups and quotients. 
\end{definition}

If $\bfS$ is a pseudovariety of finite semigroups, then 
$\bfS^\star$ is the pseudovariety of finite involution semigroups  whose semigroup reducts are in $\bfS$.

\begin{example}
\begin{enumerate}
\item $\LI$ is the pseudovariety of all finite semigroups $S$ such that $S$ is locally trivial, i.e., for each idempotent $e \in S$ and element $x\in S$, $exe=e$. Hence, $\LI^{\star}$ is the set of all involution semigroups whose semigroup reducts are in $\LI$.

\item $\Acom^\star$ is the pseudovariety of all finite involution semigroups $(S, \star)$ such that $S$ is commutative and aperiodic. Since $S$ is commutative, the identity function $\id$ is always an involution on $S$.   
\end{enumerate}
\end{example}

We let $\Reg_{A}$ denote the family of all regular languages over the alphabet $A$.
A \emph{variety of recognisable languages} is a function $\calV$ that maps each finite alphabet $A$ to a subset of $\Reg_{A}$ such that the following conditions hold:

\begin{enumerate}
    \item For each alphabet $A$, the family of languages $\calV(A)$ is closed under union, intersection and complement, i.e., $\calV(A)$ is a Boolean algebra.     
    \item For each $L \in \calV(A)$ and each letter $a\in A$, the quotients $a^{-1}L$ and $La^{-1}$ are in $\calV(A)$.
    \item For each morphism $h\colon A^+ \rightarrow  B^+$, if $L \in \calV(B)$, then $h^{-1}(L) \in \calV(A)$.
      \end{enumerate}

\begin{example}
\begin{enumerate}
\item $\calLI$ is the variety of languages where $\calLI(A)$ is the set of all languages $L \subseteq A^+$ that are  finite boolean combinations of languages of the form $uA^*$ or $A^*u$, for some $u \in A^+$. The language variety $\calLI$ corresponds to 
the pseudovariety $\LI$ \cite{PIN}.

\item Let $L(a,k) \subseteq A^{+}$ be the set of all words in which letter $a \in A$ appears exactly $k$ times.
$\calAcom$ is the variety where $\calAcom(A)$ is the set of all languages $L \subseteq A^+$ that are finite boolean combinations of languages of the form $L(a,k)$ \cite{PIN,StraubingBook}.

\end{enumerate}  
\end{example}

\begin{definition}
A \emph{involution variety of languages}, also called a $\star$-variety of languages, is a function $\calV$ that maps each involutory alphabet $\scrA=(A, \da)$ to a family of recognisable languages over the alphabet $A$ such that the following holds.
\begin{enumerate}
    \item For each $\scrA$, the set $\calV(\scrA)$ is a Boolean algebra.  
    \item For each $L \in \calV(\scrA)$ and each letter $a\in A$,  the quotients $a^{-1}L$ and $La^{-1}$ are in $\calV(\scrA)$.
    \item Let $h \colon (A^+, \da) \rightarrow (B^+, \star)$ be a morphism. Then, if $L \in \calV(\scrB)$, then $h^{-1}(L) \in \calV(\scrA)$.
    \item If $L \in \calV(\scrA)$, then $L^{\da}$ is in $\calV(\scrA)$.
      \end{enumerate}
\end{definition}

\subsection{Eilenberg-type correspondence} 

We define a correspondence between $\star$-varieties of languages and pseudovarieties of finite involution semigroups. We write $\bfV, \bfW,\dots$ for pseudovarieties of finite involution semigroups and $\calV, \calW, \dots$ for involution varieties of languages. With each involution variety of languages $\calV$, we associate a pseudovariety of involution semigroups $\bfV$, denoted as  $\calV \mapsto  \bfV$, where $\bfV$  is the pseudovariety generated by the syntactic $\star$-semigroups of languages in $\calV$. Likewise, we associate an involution variety of languages $\calV$ with every pseudovariety of involution semigroups $\bfV$, denoted as $\bfV \mapsto \calV$, where $\calV$ maps each involutory finite alphabet $(A, \da)$ to all the languages over $A$ recognised by $\star$-semigroups in $\bfV$.

\begin{theorem}
\label{thm:eilen}
The correspondences $\bfV \mapsto \calV$ and $\calV \mapsto \bfV $ define mutually inverse bijective correspondences between pseudovarieties
of involution semigroups and $\star$-varieties of languages.
\end{theorem}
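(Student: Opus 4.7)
My plan is to follow the standard Eilenberg strategy adapted to the involutory setting. First I would verify both assignments are well defined. For $\bfV \mapsto \calV$, the four $\star$-variety closure properties of $\calV(\scrA) = \{L \subseteq A^+ : L \text{ is recognised by some } \scrS \in \bfV\}$ follow by the usual arguments: Boolean operations via direct products and complementation of accepting sets; quotients via translated accepting sets; inverse images under involutory morphisms by composition of $\star$-morphisms; and closure under $\da$ via the identity $L^\da = h^{-1}(P^\star)$, where $P^\star = \{p^\star : p \in P\}$, which uses that $h$ is a $\star$-morphism. The assignment $\calV \mapsto \bfV$ yields a pseudovariety by construction.

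For the roundtrip $\bfV \mapsto \calV \mapsto \bfV'$, the inclusion $\bfV' \subseteq \bfV$ is immediate from Proposition~\ref{prop:syn-prop} together with closure under division. For the reverse inclusion, given $\scrS = (S, \star) \in \bfV$, take the involutory alphabet $\scrA = (S, \star)$ and the natural surjective involutory morphism $\pi \colon \scrA^+ \to \scrS$. Each language $L_s = \pi^{-1}(s)$, for $s \in S$, lies in $\calV(\scrA)$, so $\S(L_s) \in \bfV'$. A short verification shows that the map $s \mapsto ([\![s]\!]_{L_t})_{t \in S}$ extends to an injective $\star$-morphism from $\scrS$ into $\prod_{s \in S} \S(L_s)$, with injectivity following from the fact that $s \in L_s$ but $s' \notin L_s$ whenever $s \ne s'$. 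Hence $\scrS \in \bfV'$.

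For the other roundtrip $\calV \mapsto \bfV \mapsto \calV'$, the inclusion $\calV \subseteq \calV'$ is immediate since every $L \in \calV(\scrA)$ has $\S(L) \in \bfV$ and is recognised by $\S(L)$. For $\calV' \subseteq \calV$, take $L \in \calV'(\scrA)$ recognised by some $\scrS \in \bfV$; Proposition~\ref{prop:syn-prop} gives $\S(L) \in \bfV$, so $\S(L)$ divides a finite direct product $\S(L_1) \times \cdots \times \S(L_n)$ with each $L_i \in \calV(\scrA_i)$. Writing $L = h^{-1}(P)$ for $h \colon \scrA^+ \to \prod_i \S(L_i)$ and decomposing $P$ into unions of rectangles reduces the task to showing each $h_i^{-1}(P_i) \in \calV(\scrA)$ for the component $\star$-morphisms $h_i \colon \scrA^+ \to \S(L_i)$. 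The standard strategy is to lift $h_i$ along the syntactic morphism $\varphi_{L_i}$ to an involutory morphism $\hat{h}_i \colon \scrA^+ \to \scrA_i^+$, so that $h_i^{-1}(P_i) = \hat{h}_i^{-1}(\varphi_{L_i}^{-1}(P_i))$ becomes the involutory pullback of a Boolean combination of two-sided quotients of $L_i$, which lies in $\calV(\scrA_i)$ by the four closure properties.

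The principal obstacle is producing this lift $\hat{h}_i$. For non-hermitian pairs $\{a, a^\da\}$ of $\scrA$ one picks any preimage of $h_i(a)$ under $\varphi_{L_i}$ and sets $\hat{h}_i(a^\da) = \hat{h}_i(a)^\da$. For a hermitian letter $a = a^\da$, however, $h_i(a)$ is itself hermitian in $\S(L_i)$ and one requires a palindromic word in $A_i^+$ mapping to it under $\varphi_{L_i}$. Producing such a hermitian preimage is where the involutory setting genuinely departs from the classical Eilenberg argument, and it will require a careful analysis exploiting the description of $\S(L_i)$ as a sub-$\star$-semigroup of $S(L_i) \times S(L_i)^{\op}$ from Proposition~\ref{prop:syn-star-alt}, together with the closure of $\calV(\scrA_i)$ under involution, to either build the palindromic preimage directly or to reorganise the Boolean combination so that the non-involutory part of any naive lift cancels out.
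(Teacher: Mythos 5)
Your overall strategy is exactly the paper's: check well-definedness, then establish the two round trips, using Proposition~\ref{prop:syn-prop} together with the embedding of $\scrS$ into $\prod_{s\in S}\S(L_s)$ (the paper's Proposition~\ref{prop:div-prod-syn}) for $\bfV \mapsto \calV \mapsto \bfV$, and the factorisation through the product of syntactic $\star$-semigroups plus the fact that $\chi_i^{-1}(s_i)$ is a Boolean combination of quotients of $L_i$ (the paper's Lemma~\ref{lemma:variety-last}) for $\calV \mapsto \bfV \mapsto \calV$. Up to packaging --- the paper proves injectivity of $\bfV\mapsto\calV$ (Lemma~\ref{lemma:variety-injective}) instead of the first round trip, which is an equivalent formulation resting on the same two propositions --- everything you outline before the lifting step matches the paper and is correct.

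The lifting step, however, is left unresolved in your proposal, and it is a genuine gap: as you observe, for a hermitian letter $a=a^{\da}$ of $\scrA$ one needs a hermitian word $w\in A_i^{+}$ with $\chi_i(w)=\varphi_i(a)$, and such a word need not exist. Concretely, over the hermitian alphabet $\{a,b\}$ take $L=\{ab,ba\}$, so that $L^{\da}=L$; then $ab\ssim_{L}ba$ (any nontrivial context pushes both words out of $L$), so $[\![ab]\!]_{L}$ is a hermitian element of $\S(L)$, yet no palindrome $w$ satisfies $w\ssim_{L}ab$, since taking the empty context forces $w\in L$ and $L$ contains no palindromes. So ``pick a hermitian preimage'' fails outright, and neither of your two proposed repairs (building the preimage directly, or reorganising the Boolean combination so the non-involutory part cancels) is actually carried out --- the proof stops at an acknowledged obstacle. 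You should know that the paper's own proof of Lemma~\ref{lemma:variety-main} elides precisely this point: it produces $\psi_i$ with $\varphi_i=\chi_i\circ\psi_i$ from the mere surjectivity of $\chi_i$, and then applies closure property~(3), which the paper states for involutory morphisms, without verifying that $\psi_i$ can be chosen involutory. (The difficulty is exactly that $\scrA_i^{+}$ is a free involution semigroup only when $\da_i$ is irreflexive, as the paper itself notes earlier.) So you have identified the one step at which both your argument and the paper's are incomplete as written; a full proof must either construct an involutory lift, for instance by first passing to an irreflexive cover of $\scrA_i$ over which lifting is automatic, or bypass $\psi_i$ entirely.
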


Rest of this subsection is devoted to the proof of the above theorem. The following lemma is easy to verify.
    
\begin{lemma}The correspondences are well defined. i.e. 
\begin{enumerate}
\item If $\calV$ is a $\star$-variety of languages and $\calV \mapsto \bfV$, then $\bfV$ is a pseudovariety of involution semigroups.
\item If  $\bfV$ is a pseudovariety of involution semigroups and $\bfV \mapsto \calV$, then $\calV$ is a $\star$-variety of languages.
\end{enumerate}
\end{lemma}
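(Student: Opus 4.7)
The plan is, for each of the two correspondences, to check exactly the axioms of the target class (involution pseudovariety for (1), $\star$-variety of languages for (2)). Both parts are in the end routine, but (2) is where the involutory structure actually does work.

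For (1), note that $\bfV$ is defined as the pseudovariety \emph{generated by} the given set $\{\scrS(L) : L \in \calV(\scrA),\ \scrA \text{ an involutory alphabet}\}$ of finite involution semigroups. It therefore suffices to observe that the class of all involution pseudovarieties is closed under arbitrary intersections: each of the three defining operations (finite direct products, sub-$\star$-semigroups, homomorphic images) is preserved under intersection of subclasses. Hence the intersection of all involution pseudovarieties containing the given set exists and is by construction the required $\bfV$. No further verification is needed.

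For (2), let $\bfV$ be an involution pseudovariety and, for each involutory alphabet $\scrA = (A,\da)$, let $\calV(\scrA)$ consist of the languages $L \subseteq A^+$ recognised via an involutory morphism by some $\scrS \in \bfV$. I would verify the four closure axioms of a $\star$-variety by direct recogniser constructions, each staying inside $\bfV$. (i) \emph{Boolean closure:} if $L_1,L_2 \in \calV(\scrA)$ are recognised by involutory morphisms $h_i \colon \scrA^+ \to \scrS_i$ and accepting sets $P_i \subseteq S_i$, then $L_1 \cap L_2$ and $L_1 \cup L_2$ are recognised by the product involutory morphism into $\scrS_1 \times \scrS_2$, which lies in $\bfV$ by direct-product closure; complementation uses $\scrS_1$ with accepting set $S_1 \setminus P_1$. (ii) \emph{Quotients:} $a^{-1}L = h^{-1}(h(a)^{-1}P)$ and $La^{-1} = h^{-1}(Ph(a)^{-1})$ are recognised by the same $\scrS$ and $h$, just with a different accepting set. (iii) \emph{Inverse image:} if $g \colon \scrA^+ \to \scrB^+$ is an involutory morphism and $L \in \calV(\scrB)$ is recognised by an involutory morphism $h \colon \scrB^+ \to \scrS \in \bfV$, then $h \circ g \colon \scrA^+ \to \scrS$ is again involutory (composition preserves the identity $h(w^\da) = h(w)^\star$) and recognises $g^{-1}(L)$.

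The one step where the involutory structure is genuinely used is (iv) \emph{closure under involution}: if $L = h^{-1}(P)$ is recognised by an involutory morphism $h \colon \scrA^+ \to \scrS \in \bfV$, then, using $h(w^\da) = h(w)^\star$,
\[
L^\da = \{w \in A^+ : w^\da \in L\} = \{w : h(w)^\star \in P\} = h^{-1}(P^\star),
\]
where $P^\star = \{s^\star : s \in P\} \subseteq S$. Thus $L^\da$ is recognised by the same $\scrS \in \bfV$. This is the step that makes the definition of recognisability by a $\star$-semigroup, with involutory morphism, do its job; it is also the step that would fail if $\bfV$ were merely a pseudovariety of semigroups without an involution. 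I do not expect any real obstacle: the content is just to keep track of the fact that throughout (i)--(iv) the recognising morphism can be taken involutory, which is exactly the notion of recognisability adopted in Section~\ref{Sec:2}.
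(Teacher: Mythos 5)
Your verification is correct: part (1) is immediate from the definition of ``pseudovariety generated by,'' and in part (2) each of the four closure axioms is established by the standard recogniser constructions, with the involution-closure step $L^\da = h^{-1}(P^\star)$ correctly exploiting the involutory morphism condition. The paper itself offers no proof (it declares the lemma ``easy to verify''), and your argument is exactly the routine check it has in mind, so there is nothing to add.
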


Next we show that the maps are mutually inverse bijections. The proof follows a standard recipe; the one outlined here closely follows the proof of the Variety theorem for finite semigroups given in \cite{PIN}.

\begin{proposition} 
\label{prop:div-prod-syn}
Every involution semigroup $\scrS$ divides a finite direct product of syntactic $\star$-semigroups of languages recognised by $\scrS$.
\end{proposition}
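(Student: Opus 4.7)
The plan is to follow the familiar recipe from the classical Eilenberg theorem, with every morphism and syntactic object carrying the involution. First, since $\scrS = (S, \star)$ is finite, I would take $\scrA$ to be the involutory alphabet $(S, \star)$ and let $h\colon \scrA^+ \to \scrS$ be the evaluation map that sends each word to its product in $S$. This is a surjective $\star$-morphism, because $h((a_1 \cdots a_n)^\da) = a_n^\star \cdots a_1^\star = (a_1 \cdots a_n)^\star = h(a_1 \cdots a_n)^\star$. For each $s \in S$, set $L_s = h^{-1}(s)$; every $L_s$ is by construction recognised by $\scrS$, so these are admissible as the languages appearing in the statement.

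Next I would assemble the candidate embedding. Consider the product $\star$-morphism $\Phi\colon \scrA^+ \to \prod_{s \in S} \S(L_s)$ defined by $\Phi(w) = (\chi_{L_s}(w))_{s \in S}$, where each $\chi_{L_s}$ is the syntactic $\star$-morphism from the preceding subsection. The only genuinely involution-theoretic ingredient is the identity $L_s^\da = L_{s^\star}$: indeed, $w^\da \in L_s$ iff $h(w^\da) = h(w)^\star = s$ iff $h(w) = s^\star$, i.e.\ $w \in L_{s^\star}$. This closure of the indexing family under $\da$ is what will force the induced factorisation to respect the involution.

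Now I would show that $\Phi$ factors through $h$. If $h(w) = h(w')$, then for any $u, v \in A^*$ one has $h(uwv) = h(uw'v)$, so $uwv$ and $uw'v$ lie in the same $h^{-1}$-class and hence in the same $L_s$-class; thus $w \sim_{L_s} w'$ for every $s \in S$. Using $L_s^\da = L_{s^\star}$, the same statement applied to $s^\star$ gives $w \sim_{L_s^\da} w'$ for every $s \in S$ as well. By Definition \ref{defn:syn-star} this means $w \ssim_{L_s} w'$ for every $s$, and therefore $\chi_{L_s}(w) = \chi_{L_s}(w')$. Hence $\Phi$ descends to a well-defined $\star$-morphism $\widetilde{\Phi}\colon \scrS \to \prod_{s \in S} \S(L_s)$.

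Finally I would verify injectivity. If $\widetilde{\Phi}(s) = \widetilde{\Phi}(s')$, pick $w \in L_s$ and $w' \in L_{s'}$; the equality of the $L_s$-components gives $w \ssim_{L_s} w'$, so in particular $w \sim_{L_s} w'$, and since $w \in L_s$ we conclude $w' \in L_s$, forcing $s' = h(w') = s$. Thus $\widetilde{\Phi}$ realises $\scrS$ as a sub-$\star$-semigroup of $\prod_{s \in S} \S(L_s)$, which is division in the strongest sense. The only point that is not a mechanical transcription of the classical argument is the observation $L_s^\da = L_{s^\star}$; once that is in place, the involution-compatibility of both $\Phi$ and its factorisation falls out immediately.
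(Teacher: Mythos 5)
Your proof is correct and follows essentially the same route as the paper: both take the alphabet to be (a generating subset of) $S$, the languages $L_s=h^{-1}(s)$, and the product of their syntactic $\star$-semigroups, and both conclude by the same injectivity argument. The only cosmetic difference is that you verify the factorisation of $\prod_s \chi_{L_s}$ through $h$ by hand (via the identity $L_s^\da = L_{s^\star}$), whereas the paper obtains the component maps $\psi_s\colon\scrS\to\scrS_s$ directly from Proposition~\ref{prop:syn-prop}.
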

\begin{proof}
Let $\scrS = (S, \star)$ be an involution semigroup. Let $A \subseteq S$ be such that $A$ generates $S$ and $\da$ be the involution $\star$ restricted to $A$. Let $\scrA = (A, \da)$.  Let $h\colon\scrA^+ \rightarrow \scrS$ be the surjective $\star$-morphism that maps each sequence of elements to their product in $S$. For $s \in S$, let $L_{s}$ denote the language recognised by the set $\{s\}$, in other words $L_s = h^{-1}(s)$.  Let $\scrS_s = (S_{s}, \star_{s})$ be the syntactic $\star$-semigroup of $L_s$ and let $\chi_{s}$ be the syntactic $\star$-morphism. 

Since $\scrS$ recognises the languages $L_{s}$, there exists a surjective morphism $\psi_s$ from $\scrS$ to $\scrS_s$, by Proposition \ref{prop:syn-prop}. Let $\Pi_{s\in S} \scrS_s$ denote the direct product of the syntactic $\star$-semigroups $\scrS_s$ corresponding to each element of $\scrS$. Let $\prod_{s \in S} \chi_s $ denote the product map  $u \mapsto \(\chi_s(u)\)_{s\in S}$, for $u\in A^{+}$.  Similarly, $\psi=\prod_{s \in S} \psi_s$  denotes the product map from $\scrS$ to $\prod_{s\in S} \scrS_s$. As product of involutory morphisms these maps are involutory morphisms. 

The various maps and semigroups are detailed in Figure \ref{fig:div-prod-syn}.

We claim that $\psi$ is injective. Assume $\psi(s) = \psi(s')$ for some $s,s' \in S$. Then in particular
$\psi_s(s)=\psi_s(s')$. This implies that for all $x,y \in S^1$, $xsy = s \Leftrightarrow xs'y = s$. In particular when $x=y=1$, this implies $s =s'$. Hence, $\psi$ is injective. Therefore $\scrS$ is isomorphic to a sub-$\star$-semigroup of $\prod_{s\in S} \scrS_s$.
\end{proof}

\begin{lemma}  The map $\bfV \mapsto \calV$  is injective.
\label{lemma:variety-injective}
\end{lemma}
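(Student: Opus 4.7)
The plan is a standard Eilenberg-style argument that exploits Proposition \ref{prop:div-prod-syn} together with Proposition \ref{prop:syn-prop}. Suppose $\bfV_1$ and $\bfV_2$ are two pseudovarieties of involution semigroups with $\bfV_1 \mapsto \calV$ and $\bfV_2 \mapsto \calV$ for the same $\star$-variety of languages $\calV$; I would show $\bfV_1 \subseteq \bfV_2$, and then $\bfV_2 \subseteq \bfV_1$ by a symmetric argument, to conclude $\bfV_1 = \bfV_2$.

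To show $\bfV_1 \subseteq \bfV_2$, fix $\scrS \in \bfV_1$. First invoke Proposition \ref{prop:div-prod-syn}: $\scrS$ divides a finite direct product $\prod_{s \in S} \S(L_s)$, where each $L_s$ is a language over a suitable involutory alphabet $\scrA$ that is recognised by $\scrS$ itself. Because $\bfV_1 \mapsto \calV$ and $\scrS \in \bfV_1$, every such $L_s$ lies in $\calV(\scrA)$. Since the same variety $\calV$ also arises from $\bfV_2$, each $L_s$ is recognised by some involution semigroup $\scrT_s \in \bfV_2$.

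Next, by Proposition \ref{prop:syn-prop}, $\S(L_s)$ divides every $\star$-semigroup recognising $L_s$; hence $\S(L_s)$ divides $\scrT_s$, i.e.\ it is a quotient of a sub-$\star$-semigroup of $\scrT_s$. Closure of the pseudovariety $\bfV_2$ under sub-$\star$-semigroups and quotients therefore places each $\S(L_s)$ in $\bfV_2$. Closure under finite direct products gives $\prod_{s\in S} \S(L_s) \in \bfV_2$, and finally closure under sub-$\star$-semigroups and quotients again gives $\scrS \in \bfV_2$, since $\scrS$ divides this product. Thus $\bfV_1 \subseteq \bfV_2$, and symmetry finishes the proof.

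The argument is mostly routine bookkeeping once the right lemmas are in place; there is no real obstacle beyond checking that the notion of division in Propositions \ref{prop:syn-prop} and \ref{prop:div-prod-syn} is genuinely compatible with the closure properties of involution pseudovarieties (sub-$\star$-semigroups, quotients in the involutory sense, and finite direct products with the componentwise involution). The only subtle point worth mentioning is that one must use the involutory versions of these constructions throughout, so that the witnessing morphisms and inclusions respect the involution; this is exactly what Propositions \ref{prop:syn-prop} and \ref{prop:div-prod-syn} were designed to ensure.
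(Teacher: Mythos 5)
Your proposal is correct and follows essentially the same route as the paper's proof: both use Proposition \ref{prop:div-prod-syn} to write $\scrS$ as a divisor of a product of syntactic $\star$-semigroups of languages it recognises, and Proposition \ref{prop:syn-prop} to place those syntactic $\star$-semigroups in the other pseudovariety. The only cosmetic difference is that the paper proves the slightly stronger monotonicity statement ($\calV \subseteq \calW$ implies $\bfV \subseteq \bfW$) and deduces injectivity from it, whereas you argue directly from equality.
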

\begin{proof}

Assume that $\bfV \mapsto \calV$ and $\bfW \mapsto \calW$. It suffices to show that if $\calV \subseteq \calW$, then $\bfV \subseteq \bfW$. If so, then 
\begin{align*}
\calV = \calW &\implies \calV \subseteq \calW \text{ and } \calW \subseteq \calV \\
&\implies  \bfV \subseteq \bfW \text{ and } \bfW \subseteq \bfV\\
&\implies \bfV = \bfW~,
\end{align*}
and the map is injective. Next, we prove this.

Assume that $\calV \subseteq \calW$. Let $\scrS = (S, \star)$ be a $\star$-semigroup in $\bfV$. We show that $\scrS$ is in $\bfW$ as well. Let $K$ be the set of all languages recognised by $\scrS$. The languages in $K$ are in $\calV$ and hence in $\calW$ also. This means there exist $\star$-semigroups recognising languages in $K$ in the pseudovariety $\bfW$, and hence the syntactic $\star$-semigroups of languages in $K$ are in  
 $\bfW$, by Proposition \ref{prop:syn-prop}. Therefore, their product is also in $\bfW$. Since $\scrS$ divides this product by Proposition \ref{prop:div-prod-syn}, $\scrS$ is in $\bfW$. Since $\scrS$ was arbitrary, this shows that $\bfV \subseteq \bfW$.
\end{proof}

\begin{lemma}
If $\calV \mapsto \bfV \mapsto \calW$ then $\calV = \calW$.
 \label{lemma:variety-main}
\end{lemma}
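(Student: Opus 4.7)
The plan is to show the two inclusions $\calV \subseteq \calW$ and $\calW \subseteq \calV$ separately. The forward inclusion is immediate: if $L \in \calV(\scrA)$ then $\S(L) \in \bfV$ by the very definition of $\bfV$, and since $\S(L)$ recognises $L$ via its syntactic $\star$-morphism $\chi_{L}$, we have $L \in \calW(\scrA)$.

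For the converse, fix $L \in \calW(\scrA)$ recognised by some $\scrS \in \bfV$. By Proposition~\ref{prop:syn-prop}, $\S(L)$ divides $\scrS$, so $\S(L) \in \bfV$. Using the standard fact that the pseudovariety generated by a family of finite algebras coincides with the class of divisors of finite direct products of its members (applied in the category of involution semigroups), $\S(L)$ divides $\prod_{i=1}^{n} \S(K_{i})$ for some $K_{i} \in \calV(\scrB_{i})$. Hence $L$ is already recognised by this product via an involutory morphism $\tilde{h} = (\tilde{h}_{1}, \ldots, \tilde{h}_{n})$ whose components $\tilde{h}_{i} : \scrA^{+} \to \S(K_{i})$ are involutory. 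Writing $L$ as a Boolean combination of the sets $\tilde{h}_{i}^{-1}(s)$ for $s \in \S(K_{i})$, and using closure of $\calV(\scrA)$ under Boolean operations, the problem reduces to the following key claim: for every $K \in \calV(\scrB)$, every involutory morphism $\varphi : \scrA^{+} \to \S(K)$, and every $s \in \S(K)$, the language $\varphi^{-1}(s)$ lies in $\calV(\scrA)$.

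To establish the key claim, first observe that $\chi_{K}^{-1}(s) \subseteq \scrB^{+}$ is a finite Boolean combination of quotients of the form $u^{-1} K v^{-1}$ and $u^{-1} K^{\da} v^{-1}$ for $u,v \in \scrB^{*}$ (a direct consequence of the description $\ssim_{K} = {\sim_{K}} \cap {\sim_{K^{\da}}}$ of the syntactic $\star$-congruence), and hence belongs to $\calV(\scrB)$ by closure of $\calV$ under involution, quotients and Boolean operations. The idea is then to build an involutory morphism $\eta : \scrA^{+} \to \scrB^{+}$ satisfying $\chi_{K} \circ \eta = \varphi$, for then $\varphi^{-1}(s) = \eta^{-1}(\chi_{K}^{-1}(s)) \in \calV(\scrA)$ by closure under inverse images of involutory morphisms. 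For non-hermitian pairs $\{a, a^{\da}\}$ one picks $\eta(a) \in \chi_{K}^{-1}(\varphi(a))$ arbitrarily and sets $\eta(a^{\da}) = \eta(a)^{\star}$, a choice consistent with the involutory nature of $\varphi$.

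The main obstacle is the hermitian case: when $a^{\da} = a$, the element $\varphi(a)$ is hermitian in $\S(K)$, but there may be no hermitian word in $\scrB^{+}$ whose $\chi_{K}$-image equals $\varphi(a)$ (e.g. when $\scrB$ has a fixed-point-free involution while $\S(K)$ possesses hermitian elements that are not idempotent). I anticipate handling this by enlarging the generating family, replacing each pair $(\scrB_{i}, K_{i})$ by an extension that adjoins new hermitian letters surjecting onto the hermitian elements of $\S(K_{i})$, together with an extension of $K_{i}$ that remains in $\calV$ over the enlarged alphabet. After this enlargement every hermitian element of $\S(K_{i})$ admits a hermitian preimage in the free involution semigroup, the involutory lift $\eta$ can be defined uniformly, and the argument then closes.
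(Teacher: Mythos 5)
Your proof follows the same route as the paper's: the inclusion $\calV\subseteq\calW$ via the definition of the correspondences, and the converse by writing $\S(L)$ as a divisor of a product $\prod_i\S(K_i)$ of syntactic $\star$-semigroups of languages $K_i\in\calV(\scrB_i)$, decomposing $L$ Boolean-wise into sets $\varphi_i^{-1}(s)$, and reducing to the claim that $\chi_{K}^{-1}(s)$ pulls back into $\calV(\scrA)$ along a lift $\eta$ of $\varphi$ through the syntactic morphism $\chi_{K}$ (the paper's $\psi_i$ in Figure \ref{fig:variety-main} together with Lemma \ref{lemma:variety-last}). Two of your refinements are genuinely to the good: you describe $\chi_K^{-1}(s)$ as a Boolean combination of quotients of \emph{both} $K$ and $K^{\da}$, which is what the congruence $\ssim_K={\sim_K}\cap{\sim_{K^{\da}}}$ actually requires (the paper's Lemma \ref{lemma:variety-last} uses only quotients of $L$ and concludes from $u\sim_L x$ that $u=x$, which in $\S(L)$ needs the second half of the congruence as well); and you explicitly flag that the lift $\eta$ must be \emph{involutory}, a requirement the paper's proof passes over when it asserts the existence of $\psi_i$ merely from surjectivity of $\chi_i$.

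The gap is that the one step you correctly identify as the crux is not actually carried out. For a hermitian letter $a=a^{\da}$ of $\scrA$, the element $\varphi(a)$ is hermitian in $\S(K)$, but a hermitian element of $\S(K)$ need not have a hermitian preimage word in $\scrB^{+}$: if $\da$ is fixed-point-free on $B$, every hermitian word of $\scrB^{+}$ has the form $uu^{\da}$ and so maps onto an element of the form $xx^{\star}$, which misses hermitian elements not of that form (already for $B=\{a,b\}$ with $a^{\da}=b$ and $K=\{a,b\}$, the class of $a$ is hermitian but has no hermitian representative). Your proposed repair --- adjoin fresh hermitian letters to $\scrB_i$ mapping onto the hermitian elements of $\S(K_i)$, together with an extension $K_i'$ of $K_i$ over the enlarged alphabet that ``remains in $\calV$'' --- is only announced, and it is not clear it can be made to work: to certify $K_i'\in\calV(\scrB_i')$ using the closure properties of $\calV$, the natural move is to exhibit an involutory morphism from $\scrB_i'^{+}$ to $\scrB_i^{+}$ (or into $\S(K_i)$ factoring through $\chi_{K_i}$) sending each new hermitian letter to a hermitian representative of the corresponding element, which is exactly the obstruction you are trying to circumvent. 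So the argument is circular at the point where it most needs to be explicit, and the proof as written is incomplete there. You should either restrict the construction so that the lift is only ever needed at non-hermitian letters and at hermitian elements of the form $xx^{\star}$, or give an independent argument that $\varphi^{-1}(s)\in\calV(\scrA)$ for an arbitrary involutory morphism $\varphi\colon\scrA^{+}\rightarrow\S(K)$; neither is supplied.
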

 \begin{proof}
Assume that $\calV \mapsto \bfV \mapsto \calW$.
Firstly, if $L \in \calV(\scrA)$, for some $\scrA=(A, \da)$, then one has $\S(L) \in \bfV$ by definition of the map $\calV \mapsto \bfV$. Hence,
 $L \in \calW(\scrA)$, by definition of the map $\bfV \mapsto \calW$.
Therefore,  $\calV(\scrA) \subseteq \cal{W}(\scrA)$, for every involutory alphabet $\scrA$.

Next we prove the inclusion $\calW(\scrA)\subseteq \calV(\scrA)$. Let $L \in \calW(\scrA)$.
Then, $\S(L)\in\bfV$, and since $\bfV$
is the pseudovariety generated by syntactic $\star$-semigroups of involutory languages in $\calV$, there
exists finitely many of languages $L_{1} \in \mathcal{V}(\scrA_{1}), \ldots, L_{n} \in \mathcal{V}(\scrA_{n})$, where $n>0$ and $\scrA_i = (A_i, \da_i)$,
such that $\scrS(L)$ divides $\S(L_{1})\times\dots\times \S(L_{n})$.
Let $\scrS = \S(L_{1})\times\dots\times \S(L_{n})$. Since $\S(L)$
divides $\scrS$, it is a quotient of a sub-$\star$-semigroup $\scrT = (T, \star)$ of $\scrS$. 

Observe
that $\scrT$ recognises $L$. Therefore there exists a surjective
$\star$-morphism $\varphi\colon\scrA^{+}\rightarrow \scrT$ and a subset
$P$ of $T$ such that $L = \varphi^{-1}(P)$. Let $\pi_i\colon\scrS\rightarrow \S(L_{i})$ be the $i^{th}$ projection defined by $\pi_{i}(s_{1},\ldots,s_{n})=s_{i}$.
Put $\varphi_{i}=\pi_{i}\circ\varphi$ and let $\chi_i\colon\scrA^+_{i} \rightarrow \scrS(L_{i})$
be the syntactic morphism of $L_{i}$. Since $\chi_{i}$ is surjective,
there exists a morphism of semigroups $\psi_{i}\colon\scrA^+\rightarrow \scrA_i^{+}$
such that $\varphi_{i}=\chi_{i}\circ\psi_{i}$. We can summarise the
situation in a diagram in Figure \ref{fig:variety-main}. 

\begin{figure}
\subfloat[][Various maps in the proof of Proposition \ref{prop:div-prod-syn}.]{\hfil
\xymatrixrowsep{.75in}
\xymatrixcolsep{.75in}
\SelectTips{cm}{}
\xymatrix{ 
{(A^+, \da)} \ar[d]_{\prod_{s\in S} \chi_s} \ar[r]^h & {\scrS} \ar[dl]^{\psi=\prod_{s \in S}\psi_s}\\
{\prod_{s\in S} \scrS_s}  & \\
}
}
\label{fig:div-prod-syn}
\centering
\subfloat[][Diagram in Lemma \ref{lemma:variety-main}]{\leavevmode
\xymatrixrowsep{.75in}
\xymatrixcolsep{.75in}
\SelectTips{cm}{}
\xymatrix{ \scrA^+ \ar@{->>}[d]^{\varphi} \ar[r]^{\psi_{i}} \ar[rd]^{\varphi_{i}} & \scrA^+_{i} \ar@{->>}[d]^{\chi_{i}} \\
\scrT\subseteq \scrS \ar[r]^-{\pi_{i}} &  \scrS(L_i)
}
}
\caption{}
\label{fig:variety-main}
\end{figure}

%\begin{figure}[h!]
%\hfil
%\xymatrixrowsep{.75in}
%\xymatrixcolsep{.75in}
%\xymatrix{ 
%{(A^+, \da)} \ar[d]_{\prod_{s\in S} \chi_s} \ar[r]^h & {\scrS} \ar[dl]^{\psi=\prod_{s \in S}\psi_s}\\
%{\prod_{s\in S} \scrS_s}  & \\
%}
%\caption{Various maps in the proof of Proposition \ref{prop:div-prod-syn}.}
%\label{fig:div-prod-syn}
%\end{figure}
%
%

Next we show that $L\in \calV(\scrA)$ that proves the claim $\calW(\scrA)\subseteq \calV(\scrA)$.
Firstly,
\[
\hfil L=\varphi^{-1}(P) =\bigcup_{s \in P}\(  \varphi^{-1}(s)\)
\]

Since $\mathcal{V}(\scrA)$ is closed under union, it suffices
to establish that for every $s\in P$, one has that $\varphi^{-1}(s)$
belongs to $\calV(\scrA)$. Let $s=(s_{1},\ldots,s_{n})$,
then $s=\bigcap_{1\leq i\leq n}\pi_{i}^{-1}(s_{i})$. Hence
\[
\varphi^{-1}(s) = \bigcap_{1\leq i\leq n} \( \varphi^{-1}\(\pi_i^{-1}\(s_i\)\) \)=\bigcap_{1\leq i\leq n} \varphi_i^{-1}\(s_{i}\)~.
\]

As $\calV(\scrA)$ is closed under union and intersection, it is enough to show that
$ \varphi_i^{-1}\big(s_i\big)$ is in  $\cal{V}(\scrA)$ for each $1\leq i\leq n$.
Since $\varphi_{i}=\chi_{i}\circ\psi_{i}$, one has $\varphi_{i}^{-1}(s_{i})=\psi_{i}^{-1}(\chi_{i}^{-1}(s_{i}))$.
Now since $\mathcal{V}$ is a involution variety of languages, it suffices
to prove that $\chi_{i}^{-1}\big(s_{i}\big)$ is in $\calV(\scrA_i)$, that results from the  Lemma \ref{lemma:variety-last}.

 \end{proof} 
 
 \begin{lemma}
\label{lemma:variety-last}
Let $\mathcal{V}$ be an involution variety of  languages and let $\chi\colon\scrA^+\rightarrow \scrS(L)$
be the syntactic $\star$-morphism of a  language $L$ of $\calV(\scrA)$.
Then for every $x\in \scrS(L),\, \chi^{-1}(x)$ belongs to $\calV(\scrA)$.
\end{lemma}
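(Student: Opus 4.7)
The plan is to express $\chi^{-1}(x)$ as a finite Boolean combination of two-sided word quotients of $L$ and $L^\da$, and then invoke the closure properties of $\calV$. Since $\calV$ is closed under involution, $L^\da \in \calV(\scrA)$ whenever $L \in \calV(\scrA)$; both languages are therefore available as building blocks.

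First, I would reduce $\chi^{-1}(x)$ to the syntactic $\star$-congruence classes. If $x \notin \chi(\scrA^+)$ there is nothing to show, since $\emptyset \in \calV(\scrA)$. Otherwise pick any representative $w_0 \in \chi^{-1}(x)$. By Definition~\ref{defn:syn-star} and Proposition~\ref{prop:syn-star-alt},
\[
\chi^{-1}(x) = \{w \in A^+ : w \sim_L w_0\} \cap \{w \in A^+ : w \sim_{L^\da} w_0\},
\]
so it suffices to show that for every $M \in \calV(\scrA)$ and every $w_0 \in A^+$, the class $\{w : w \sim_M w_0\}$ lies in $\calV(\scrA)$; the lemma then follows by applying this with $M = L$ and with $M = L^\da$ and intersecting.

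Second, I would unfold the definition of $\sim_M$ and write, for $u,v \in A^*$, the two-sided quotient $u^{-1}Mv^{-1} = \{w \in A^+ : uwv \in M\}$. Then
\[
\{w : w \sim_M w_0\} = \bigcap_{\substack{u,v \in A^*\\ uw_0v \in M}} u^{-1}Mv^{-1} \;\cap\; \bigcap_{\substack{u,v \in A^*\\ uw_0v \notin M}} \bigl(A^+ \setminus u^{-1}Mv^{-1}\bigr).
\]
A priori this intersection is infinite, so the main (and only nontrivial) obstacle is to reduce it to finitely many terms. This follows from the finiteness of the syntactic semigroup $S(M)$: the quotient $u^{-1}Mv^{-1}$ depends only on the pair $([u]_M, [v]_M) \in S(M)^1 \times S(M)^1$, so as $(u,v)$ ranges over $A^* \times A^*$, the family $\{u^{-1}Mv^{-1}\}$ has only finitely many distinct members.

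Third, I would assemble the pieces using the axioms of an involution variety. Since $M \in \calV(\scrA)$ and $\calV(\scrA)$ is closed under quotients by single letters, an easy induction on $|u|+|v|$ gives $u^{-1}Mv^{-1} \in \calV(\scrA)$ for all $u,v \in A^*$. Since $\calV(\scrA)$ is a Boolean algebra, the finite intersection and complements above stay in $\calV(\scrA)$, so $\{w : w \sim_M w_0\} \in \calV(\scrA)$. Applying this with $M = L$ and with $M = L^\da$ (using involution closure for the latter) and intersecting yields $\chi^{-1}(x) \in \calV(\scrA)$, completing the proof.
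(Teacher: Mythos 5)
Your argument is correct and follows the same classical template as the paper's proof --- a syntactic class is a finite Boolean combination of two-sided quotients --- but it instantiates that template at a different level. The paper works inside the finite semigroup $\scrS(L)$: setting $P=\chi(L)$, it writes the singleton $\{x\}$ as a Boolean combination of the quotients $s^{-1}Pt^{-1}$ with $(s,t)$ ranging over $\scrS(L)\times\scrS(L)$, and then pulls this identity back through $\chi^{-1}$; finiteness is automatic and the involution plays no visible role. You work directly in $A^{+}$: you split $\chi^{-1}(x)$ as the intersection of a $\sim_{L}$-class with a $\sim_{L^{\da}}$-class, write each as a Boolean combination of word quotients $u^{-1}Mv^{-1}$, and must then argue separately (via the finiteness of $S(M)^{1}\times S(M)^{1}$) that only finitely many distinct quotients occur, and invoke the involution axiom to get $L^{\da}\in\calV(\scrA)$. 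This extra bookkeeping buys you something real: since $\scrS(L)=A^{+}\slash\ssim_{L}$ and $\ssim_{L}$ is in general strictly finer than $\sim_{L}$, the paper's step ``it follows that $u\sim_{L}x$ and thus $u=x$'' is not automatic --- the Boolean combination built from $P$ alone only isolates the class of $x$ under the syntactic congruence of $P$ in $\scrS(L)$, which can properly contain $\{x\}$ (for $L=ab^{*}$ over a hermitian alphabet, the images of $a$ and $ab$ are distinct in $\scrS(L)$ yet not separated by any $s^{-1}Pt^{-1}$), and one must also separate points using $P^{\star}=\chi(L^{\da})$. Your explicit intersection with the $\sim_{L^{\da}}$-class supplies exactly this missing separation, so your version is, if anything, the more complete of the two.
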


\begin{proof}
Assume $\scrS(L) = (S, \star)$. Fix an element $x \in S$. Let $P=\chi(L)$. Then $L=\chi^{-1}(P)$. Setting $E=\{(s,\,t)\in S^{2}\mid sxt\in P\}$,
we claim that 
\[
 \{x\}=\bigcap_{(s,t) \in E} s^{-1}Pt^{-1} \setminus \bigcup_{(s,\,t) \in E^c }s^{-1}Pt^{-1}
\]
Let $R$ be the right-hand side of the above equation. It is clear
that $x$ belongs to $R$. Conversely, let $u$ be an element of $R$.
That means if $sxt\in P$, then $sut\in P$ and if $sxt\not \in P$, then $sut \not \in P$. It follows that
$u\sim_{L}x$ and thus $u=x$, which proves the claim.

%Analogously,
%
%
%\[
% \{x\}=\(\bigcap_{(s,t) \in E} (t)^{-1}P(s)^{-1}\) \setminus \(\bigcup_{(s,\,t) \in E^c }(t)^{-1}P(s)^{-1}\).
%\]

Since $\chi^{-1}$
commutes with Boolean operations and quotients, $\chi^{-1}(x)$ is

\begin{align*}
 \chi^{-1}(x) & =\bigcap_{(s,t) \in E}  \chi^{-1} \( s^{-1}Pt^{-1}\)  
  \setminus \bigcup_{(s,\,t) \in E^c }  \chi^{-1} \( s^{-1}Pt^{-1}\) 
\end{align*}

which is a Boolean combination of quotients of $L$ and hence belongs to
$\calV(\scrA^+)$.
\end{proof}

 From Lemma \ref{lemma:variety-injective} and Lemma \ref{lemma:variety-main}
 we conclude Theorem \ref{thm:eilen}.

\subsection{Bilateral Semidirect Products of Involution Pseudovarieties} 
 
If $\bfV$ and $\bfW$ are pseudovarieties of finite semigroups, then the {\em bilateral semidirect product of the pseudovarieties}  
 is the pseudovariety generated by  all the bilateral semidirect products $S \sdp T$  with $S \in \bfV$, $T \in \bfW$.
Likewise, if $\bfV$ and $\bfW$ are pseudovarieties of involution semigroups, then
their {\em bilateral semidirect product} is the pseudovariety generated by all involutory semidirect products of the form $\scrS \sdp \scrT$, where $\scrS \in \bfV$ and  $\scrT \in \bfW$.

\begin{definition}

The {\em locally-hermitian semidirect product} of the pseudovarieties of involution semigroups $\bfV$ and $\bfW$ is the pseudovariety  generated by all locally hermitian products $\scrS \sdp \scrT$, where $\scrS \in \bfV$ and  $\scrT \in \bfW$.
\end{definition}

%
% \textcolor{blue}{Note that if $\bfV^\star$ is commutative as well, then every action of $\scrT$ on $\scrS$ would be locally hermitian. Hence the \emph{locally-hermitian block product} would then be equivalent to the regular \emph{block product}}

%\begin{proposition} \label{prop:vboxw}
%Let $\bfV$ and $\bfW$ be pseudovarieties of finite semigroups. Then
%\[
%\bfV^\star \bp \bfW^\star = \(\bfV \bp \bfW\)^\star~.
%\]
%\end{proposition}
%\begin{proof}
%
%We observe that 
%
%
%Clearly by definition $\bfV^\star \bp \bfW^\star \subseteq (\bfV \bp \bfW)^\star$.
%We prove the opposite inclusion, namely, if $(S\sdp T, \star) \in (\bfV \bp \bfW)^\star$, then $(S\sdp T, \star)$ is in  $\bfV^\star \bp \bfW^\star$.
%
%Consider the following diagram.
%\begin{figure}
%\[
%\xymatrix{
%{(S\sdp T, \star)} \ar[r]^<<<<g & {(S\sdp T, \star) \times (S\sdp T, \star)^\op} \ar[r]^<<<<h & ((S \times S^\op) \sdp  (T \times T^\op), \da)   
%}
%\]
%\caption{Diagram in the Proof of Proposition \ref{prop:vboxw}}
%\end{figure}
%
%
%The involution on semigroups in the diagram of the form $X\times X^\op$ is flip.
%The morphism $g$ is the map $a \rightarrow (a,a^\star)$. The map $h$ is defined on the image of $g$ and sends 
%$((s,t),(x,y)) \rightarrow ((s,x)(t,y))$ where $(s,t)= (x,y)^\star$.
%The involution $\da$ is $((s,x),(t,y))^\da = ((x,s),(t,y))$. It is easily verified that $g$ and $h$ are injective $\star$-morphisms.
%
%Finally we note that $(S\sdp T, \star) \times (S\sdp T, \star)^\op$ is in  $(\bfV \bp \bfW)^\star$
%and $((S \times S^\op) \sdp  (T \times T^\op), \da)$ is in $\bfV^\star \bp \bfW^*$.
%
%\end{proof}
%

From Proposition \ref{prop:ltt-char}, we deduce
 
\begin{proposition}
The involution variety $\ltt$ corresponds to the bilateral semidirect product of the pseudovarieties $\Acom^*$  and $\LI^\star$.
\end{proposition}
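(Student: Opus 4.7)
The strategy is to combine Theorem~\ref{thm:eilen} (the Eilenberg correspondence) with the involutory analogue of Proposition~\ref{prop:ltt-char} asserted in the remark that immediately follows it, namely that a language is in $\ltt$ if and only if it is recognised by an involutory semidirect product $\scrS \sdp \scrT$ with $\scrS \in \Acom^\star$ and $\scrT \in \LI^\star$ (obtained by applying the opposite-product construction of Proposition~\ref{prop:sem-to-invsem} to the recognisers supplied by Proposition~\ref{prop:ltt-char} and checking that aperiodic-commutativity and local triviality are preserved). First I would verify that $\ltt$ is indeed a $\star$-variety of languages: Boolean closure, closure under quotients, and closure under inverse images of (involutory) morphisms follow from $\ltt$ already being a classical variety of languages, while closure under $\da$ (reverse, on a hermitian alphabet) holds because the equivalence $\approx_k^t$ is self-dual -- reversing a word swaps prefix and suffix of any given length while preserving all factor multiplicities. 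Hence $\ltt$ has a well-defined counterpart pseudovariety under the correspondence $\calV \mapsto \bfV$, and it suffices to show that $\ltt(\scrA)$ is exactly the class of languages over $A$ recognised by involution semigroups from $\Acom^\star \sdp \LI^\star$.

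For the forward inclusion, given $L \in \ltt(\scrA)$, the involutory analogue of Proposition~\ref{prop:ltt-char} directly supplies $\scrS \in \Acom^\star$ and $\scrT \in \LI^\star$ with $\scrS \sdp \scrT$ recognising $L$; by the very definition of the bilateral semidirect product of pseudovarieties, $\scrS \sdp \scrT \in \Acom^\star \sdp \LI^\star$. Conversely, if $L$ is recognised by some $\scrU \in \Acom^\star \sdp \LI^\star$, then, by definition of the generated pseudovariety, $\scrU$ divides a finite direct product $\prod_i (\scrS_i \sdp \scrT_i)$ with $\scrS_i \in \Acom^\star$ and $\scrT_i \in \LI^\star$. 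Using the natural coordinate-wise isomorphism
\[
\prod_i (\scrS_i \sdp \scrT_i) \;\cong\; \Bigl(\prod_i \scrS_i\Bigr) \sdp \Bigl(\prod_i \scrT_i\Bigr),
\]
together with closure of $\Acom^\star$ and $\LI^\star$ under finite direct products, $L$ is recognised by an involutory semidirect product of an element of $\Acom^\star$ by an element of $\LI^\star$, whence $L \in \ltt$ by the analogue of Proposition~\ref{prop:ltt-char} again.

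The only genuinely technical ingredient is the coordinate-wise isomorphism above. The plan is to define the bilateral action of $\prod_i \scrT_i$ on $\prod_i \scrS_i$ componentwise, check compatibility (Equation~\ref{eqn:compatibility}) and the involutory condition (Equation~\ref{eqn:inv-compatible-eqn}) coordinate by coordinate (which reduces to the corresponding properties of each $(\scrS_i,\scrT_i)$), and finally verify that the obvious bijection of underlying sets respects both the bilateral semidirect multiplication given by Equation~\ref{eqn:sdp} and the componentwise involution given by Equation~\ref{eqn:inv-sdp}. The remainder is bookkeeping -- any divisor of a recogniser still recognises the language, so neither the embedding into a direct product nor passage to a sub-$\star$-semigroup or a quotient breaks recognisability -- and the Eilenberg theorem then packages the two inclusions into the claimed correspondence.
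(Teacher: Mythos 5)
Your proposal is correct and follows essentially the same route as the paper, which deduces this proposition directly from Proposition~\ref{prop:ltt-char} together with its earlier remark that the $S \times S^{\op}$ construction of Proposition~\ref{prop:sem-to-invsem} upgrades the classical recogniser to an involutory semidirect product of an aperiodic commutative $\star$-semigroup by a locally trivial one; the paper supplies no further detail. Your write-up simply fills in the bookkeeping the paper leaves implicit (closure of $\ltt$ under reversal, the coordinate-wise isomorphism of a product of semidirect products, and the passage through the Eilenberg correspondence).
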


Definition \ref{defn:approxr} and Theorem \ref{lemma:lrtt-acomli} can be generalized to the case of arbitrary involutory alphabets. We call the resulting involution variety of languages $\callrtt$.
Theorem \ref{lemma:lrtt-acomli} yields

\begin{theorem}
\label{thm:adj-char}
The involution variety $\callrtt$ corresponds to the locally hermitian semidirect product of the pseudovarieties $\Acom^*$  and $\LI^\star$.
\end{theorem}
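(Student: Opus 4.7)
The plan is to deduce the theorem from Theorem~\ref{lemma:lrtt-acomli} via the Eilenberg correspondence of Theorem~\ref{thm:eilen}. Let $\bfV$ denote the pseudovariety of involution semigroups corresponding to the involution variety $\callrtt$ under the Eilenberg bijection, and let $\bfW$ denote the locally hermitian semidirect product of $\Acom^\star$ and $\LI^\star$. The goal is the equality $\bfV = \bfW$.

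For the inclusion $\bfV \subseteq \bfW$: since $\bfV$ is generated by the syntactic $\star$-semigroups of the languages in $\callrtt$, it suffices to show that $\S(L) \in \bfW$ for every involutory alphabet $\scrA$ and every $L \in \callrtt(\scrA)$. By the generalised form of Theorem~\ref{lemma:lrtt-acomli}, such a language $L$ is recognised by some locally hermitian semidirect product $\scrS \sdp \scrT$ with $\scrS \in \Acom^\star$ and $\scrT \in \LI^\star$, and this product lies in $\bfW$ by definition. Proposition~\ref{prop:syn-prop} then gives that $\S(L)$ divides $\scrS \sdp \scrT$, so $\S(L) \in \bfW$.

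For the reverse inclusion $\bfW \subseteq \bfV$: it is enough to verify that every generator of $\bfW$ lies in $\bfV$, i.e., every locally hermitian product $\scrS \sdp \scrT$ with $\scrS \in \Acom^\star$ and $\scrT \in \LI^\star$. By Proposition~\ref{prop:div-prod-syn}, $\scrS \sdp \scrT$ divides a finite direct product of the syntactic $\star$-semigroups of the languages it recognises. By the converse direction of Theorem~\ref{lemma:lrtt-acomli}, each such language belongs to $\callrtt$, so its syntactic $\star$-semigroup lies in $\bfV$. Consequently $\scrS \sdp \scrT$ divides an element of $\bfV$, hence lies in $\bfV$.

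The heavy lifting is done by Theorem~\ref{lemma:lrtt-acomli}, so the one genuinely new ingredient is its adaptation from hermitian alphabets to arbitrary involutory alphabets, as promised in the sentence preceding the statement. The main point to check in that adaptation is that the constructions of $\scrS$ and $\scrT$ in the proof go through when the alphabet carries a nontrivial involution $\da$: the anchored-word/context semigroups must be built using $\da$ rather than the bare reverse, and the candidate morphism $a \mapsto (\hat{a}, a)$ must be verified to be a morphism of involution semigroups with respect to $\da$. Once this bookkeeping is in place the rest of the argument is unchanged, and the proof above is a purely algebraic translation via the Eilenberg bijection.
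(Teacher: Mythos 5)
Your proposal is correct and takes essentially the same route as the paper, which simply asserts that Theorem~\ref{lemma:lrtt-acomli} (generalised to arbitrary involutory alphabets) yields the result; you have spelled out the standard Eilenberg-correspondence bookkeeping --- using Proposition~\ref{prop:syn-prop} for one inclusion and Proposition~\ref{prop:div-prod-syn} for the other --- that the paper leaves implicit. Your closing remark correctly identifies the only genuinely new content, namely the adaptation of Theorem~\ref{lemma:lrtt-acomli} to non-hermitian involutory alphabets, which is needed in particular because Proposition~\ref{prop:div-prod-syn} produces languages over the alphabet $(A,\star|_A)$ with $A\subseteq S$, where the involution need not be the identity.
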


%The following proposition shows that in addition to the identities defining $\fo(+1)$ (i.e. those of 
%$\Acom \bp \LI$), $\star$-semigroups in the pseudovariety $\Acom^\star \hbp \LI^\star$ satisfies the identity $exe^\star = ex^\star e^\star$. 

%%%%%%%%%%%%%%%%%%%%%%%%%%%%%%%%%%%%%%%%%%%%%%%%%%%%%%%%%%%%%%%%%%%%%%%%%%%%%%%
%. Conclusion
%%%%%%%%%%%%%%%%%%%%%%%%%%%%%%%%%%%%%%%%%%%%%%%%%%%%%%%%%%%%%%%%%%%%%%%%%%%%%%%

%\section{Decidability}

\section{Conclusion}
\label{Sec:5} 

The membership problem for a class of languages $\calC$ asks: is the given regular language a member of $\calC$. The membership question is decidable for many families, the archetypical example is the class of aperiodic languages \cite{Schutzenberger}. The membership problem for locally threshold testable (\ltt) languages is equivalent to the problem of finding out if a given semigroup divides a semigroup in $\Acom * \LI$. Since $\bfV * \LI = \bfV *\bfD$ \cite{StraubingVD} (where $\bfD$ is the pseudovariety of {\em definite} semigroups that satisfies $se = e$ for all $e^2=e$) the problem is slightly simplified.
There are two main approaches to solving this problem, the first one is using the Derived 
Category theorem \cite{Tilson,TherienLTT,PinSeq, StraubingBook} and the other is using Presburger arithmetic \cite{mikolajLTT, PlaceLTT}. Both these approaches crucially depend on the fact that a semigroup $S$ divides a semigroup in  $\bfV * \bfD$ if and only if it divides a semigroup in $\bfV * \bfD_n$ (pseudovariety $\bfD_n$ satisfies the identity $yx_1\cdots x_n = x_1 \cdots x_n$),  where $n = |S|$ \cite{StraubingVD}. This is known as the delay theorem.

The key idea behind the delay theorem is that given a finite semigroup $S$, one can insert an idempotent in any product $x_1 \cdots x_n$, $x_i \in S$, of length $n=|S|$ without affecting the value of the product. This construction is not a left-to-right symmetric one. Hence, it is not clear if a delay theorem holds in the case of involutory semidirect products. This is the major hurdle in obtaining the decidability of membership of $\lrtt$.

The algebraic characterisation of the class $\lrtt$ presented in this work, is not decidable, i.e., the characterisation does not yield an algorithm to check if a given recognisable language is \lrtt or not. In the case of \ltt languages, the decidability is obtained by an algebraic characterisation in terms of categories \cite{Tilson}. The next logical step in our work is to extend the theory to categories.

%%%%%%%%%%%%%%%%%%%%%%%%%%%%%%%%%%%%%%%%%%%%%%%%%%%%%%%%%%%%%%%%%%%%%%%%%%%%%%%
%. BIBLIOGRAPHY
%%%%%%%%%%%%%%%%%%%%%%%%%%%%%%%%%%%%%%%%%%%%%%%%%%%%%%%%%%%%%%%%%%%%%%%%%%%%%%%

\nocite{*}

%%%%%%%%%%%%%%%%%%%%%%%%%%%%%%%%%%%%%%%%%%%%%%%%%%%%%%%%%%%%%%%%%%%%%%%%%%%%%%%
%. APPENDIX
%%%%%%%%%%%%%%%%%%%%%%%%%%%%%%%%%%%%%%%%%%%%%%%%%%%%%%%%%%%%%%%%%%%%%%%%%%%%%%%

\end{document}